\documentclass{conm-p-l}
\usepackage{graphicx} 
\usepackage{amsmath}
\usepackage{amsfonts}
\numberwithin{equation}{subsection}
\usepackage{amsthm}
\usepackage{amssymb}
\usepackage{graphicx}
\usepackage{setspace}
\usepackage[dvipsnames]{xcolor}
\usepackage{enumitem}
\usepackage{float}
\usepackage{caption}
\usepackage{stmaryrd}
\usepackage{ragged2e}
\usepackage{comment}
\usepackage{longtable}
\usepackage{pdflscape}
\newtheorem*{theorem}{Main Theorem}

\newtheorem{prop}{Lemma}[subsection]

\usepackage[textwidth=0.89in,textsize=scriptsize]{todonotes}

\newcommand{\SO}{\mathrm{SO}}
\newcommand{\Co}{\mathrm{Co}}
\newcommand\Spin{\mathrm{Spin}}
\DeclareMathOperator{\tr}{tr}
\DeclareMathOperator\str{str}
\newcommand{\Mod}{\mathrm{Mod}}

\newcommand\rO{\mathrm O}
\newcommand\rH{\mathrm H}
\newcommand\bC{\mathbb C}
\newcommand\bQ{\mathbb Q}
\newcommand\bZ{\mathbb Z}
\newcommand\bR{\mathbb R}
\newcommand\cH{\mathcal H}
\newcommand\rR{\mathrm{R}}
\newcommand\NS{\mathrm{NS}}
\newcommand\bN{\mathbb N}
\newcommand\rU{\mathrm{U}}
\newcommand\cL{\mathcal L}

\DeclareMathOperator\Aut{Aut}

\usepackage[
backend=biber,
style=alphabetic,
sorting=ynt
]{biblatex}
\addbibresource{main.bib}

\usepackage{soul} 

\newcommand\group{\mathcal S}

\title{Ground-state degeneracy of twisted sectors of Conway Moonshine SCFT}
\author{Alissa Furet and Theo Johnson-Freyd}
\thanks{This article summarizes the main results of the first author's MSc thesis (Dalhousie University, 2022). The second author thanks Ying Lin for related discussion. Both authors were supported by the NSERC Discovery Grant RGPIN-2021-02424 and by the Simons Collaboration on Global Categorical Symmetries. \\ \textsc{Dalhousie University, Department of Mathematics and Statistics} \\ \textsc{Perimeter Institute for Theoretical Physics} \\ \url{alissa.furet@mail.mcgill.ca}, \url{theojf@pitp.ca}}

\begin{document}

\maketitle

\begin{abstract}
  We calculate the ground state degeneracies of all twisted sectors in the ``Conway Moonshine'' holomorphic SCFT $V^{f\natural}$. We find that almost all sectors have ground states of only a single parity: specifically, 66 twisted sectors have nontrivial ground states of a single parity, 39 twisted sectors have spontaneous supersymmetry breaking, and only 6 twisted sectors have ground states of both parities. Although ``nontrivial ground states, all of the same parity'' is the expected behavior for a generic SQM model without symmetry protection, it is surprising in the presence of a large symmetry group, as is the case in $V^{f\natural}$. This surprise  hints that there are as-yet-undiscovered features of Conway Moonshine.
\end{abstract}

\section{Introduction}
In a supersymmetric quantum mechanics (SQM) model --- a super Hilbert space equipped with an odd self-adjoint operator $\hat{G}$ --- the bosonic and fermionic modes almost exactly pair up. More precisely, in any eigenspace for the Hamiltonian $\hat{H} \propto \hat{G}^2$ (the precise proportionality factor depends on conventions) with eigenvalue $E \neq 0$, the supersymmetry $\hat{G}$ supplies an isomorphism between the bosonic and fermionic subspaces. But in the ground state --- the eigenvectors with eigenvalue $E=0$ --- there are typically different numbers of bosons and fermions unless there is some extra symmetry identifying them. In the absence of any symmetry, one expects a generic SQM model to have all ground states of the same parity: a small deformation to $\hat{G}$ will typically annihilate ground states in boson-fermion pairs. For special models, there are no ground states at all: such models are said to have \emph{spontaneous supersymmetry breaking}.

The story is more subtle in the presence of a symmetry group $\group$. In this case, the boson and fermion ground states are each (typically reducible) representations of $\group$, and a small $\group$-symmetric deformation to $\hat{G}$ can only annihilate pairs of states in isomorphic representations. As such, a generic SQM model enhanced with $\group$-symmetry will typically have both boson and fermion ground states, of different numbers. We will refer to a symmetry-enhanced SQM model as \emph{conspiratorial} when it lacks spontaneous supersymmetry breaking but nevertheless conspires to have only ground states of a single parity.

One interesting construction of SQM models with nontrivial symmetry groups starts with a highly-symmetric holomorphic 1+1D superconformal field theory (SQFT) $V$. Pick any supersymmetry-preserving automorphism $g$ of $V$. Then the $g$-twisted Ramond sector of $V$, which we will denote  $V_{\rR g}$, comes with a supersymmetry generated by the zero mode of the supercurrent, and comes with a projective symmetry by the centralizer $C(g)$ of $g \in \operatorname{Aut}(V)$. Fairly often in these SQM models there are  equal numbers of bosonic and fermionic ground states. For example, this happens whenever $g$ commutes with a continuous family of supersymmetry-preserving automorphisms of $V$ (see Lemma~\ref{prof:fermionzeromode}). The most important holomorphic SCFT without continuous automorphisms is  the ``Conway Moonshine'' theory $V^{f\natural}$ of J.\ Duncan \cite{DunSMforConLargGrp}. It is the unique $c\leq 12$ holomorphic SCFT without continuous automorphisms, and its (finite) group of automorphisms is a copy of Conway's largest sporadic simple group $\Co_1$.

We decided to calculate the number of ground states of each parity for each of the $g$-twisted Ramond sectors of $V^{f\natural}$. We find the following:

\begin{theorem}\label{mainresult}
  For exactly 66 of the 101 conjugacy classes $[g]$ in $\Co_1$, the $g$-twisted Ramond sector $V^{f\natural}_{\rR g}$ of $V^{f\natural}$ is conspiratorial: it has nontrivial ground states of a single parity. For exactly 39 conjugacy classes, $V^{f\natural}_{\rR g}$  has spontaneous supersymmetry breaking. Exactly six conjugacy classes are nonconspiratorial, with ground states in $V^{f\natural}_{\rR g}$ of both parities.
\end{theorem}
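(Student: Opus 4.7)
The plan is to reduce the theorem to a case-by-case numerical calculation indexed by the 101 conjugacy classes $[g]$ of $\Co_1$, exploiting the realization of $V^{f\natural}$ as a $\bZ/2$-orbifold of the free-fermion SCFT $F(24)$ on a 24-dimensional Euclidean space, on which $\Co_0 = 2.\Co_1$ acts as a subgroup of $\rO(24)$. This description makes the Ramond-sector ground states completely transparent in the pre-orbifold theory: in $F(24)$ they form the spinor representation of $\Spin(24)$, and in each twisted sector they form a Clifford module of fermion zero modes controlled by the eigenvalues of the twisting element.

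First, for each conjugacy class $[g]$ I would fix a lift $\tilde g \in \Co_0$, record its Frame shape, and extract the eigenvalues $\{e^{2\pi i \nu_j}\}_{j=1}^{24}$ with $\nu_j \in [0,1)$. Two derived invariants carry all the information I need downstream: the Ramond zero-point energy $E_\rR(\tilde g)$ of the $\tilde g$-twisted sector of $F(24)$, a standard free-fermion expression in the $\nu_j$, and the number $n_0(\tilde g)$ of eigenvalues equal to $+1$, which fixes the rank of the Clifford algebra of fermion zero modes and hence the dimension $2^{\lfloor n_0/2\rfloor}$ of the spinor module at the vacuum. The same data for the companion lift $-\tilde g$ is also needed.

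Next I would assemble the $g$-twisted Ramond sector of the orbifold. For each $[g]$ there are two $F(24)$-pieces to combine, namely the $\tilde g$- and $(-\tilde g)$-twisted Ramond sectors of $F(24)$, each projected onto its $\bZ/2$-invariant subspace with the appropriate discrete-torsion sign. At the level of ground states this reduces to computing the $\bZ/2$-equivariant, parity-graded dimension of a pair of Clifford modules, yielding the ordered pair $(b_g, f_g)$ of bosonic and fermionic ground-state multiplicities. A class exhibits spontaneous SUSY breaking exactly when $\min\{E_\rR(\tilde g), E_\rR(-\tilde g)\}$ lies strictly above the universal shift, so that both Clifford modules sit at strictly positive conformal weight; a class is conspiratorial when $(b_g, f_g)$ is nonzero with only one nonzero entry; otherwise it is nonconspiratorial. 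Lemma~\ref{prof:fermionzeromode} handles many classes automatically, forcing $b_g = f_g$ whenever an odd Clifford rank produces an unpaired fermion zero mode that generates a circle of extra supersymmetries.

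Finally I would tabulate the result across all 101 classes of $\Co_1$; the Frame-shape and zero-point data are classical and appear in Duncan's original analysis of $V^{f\natural}$, and the per-class arithmetic is mechanical. The main obstacle I anticipate is not these individual computations but the bookkeeping of the orbifold cocycle: the distinction between the two lifts $\tilde g$ and $-\tilde g$, which matters precisely when $g \in \Co_1$ and $\tilde g \in \Co_0$ have different orders, and the sign with which the orbifold $(-1)^F$ acts on each Clifford module, jointly determine the parity split. It is at exactly these boundary cases that the delicate separation between the 66 conspiratorial and 6 nonconspiratorial classes is decided, and I would spend the bulk of the effort pinning down these sign conventions before trusting the final table.
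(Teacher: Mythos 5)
Your overall strategy --- reduce the theorem to a per-class computation driven by Kondo's Frame shapes, extracting the ground-state counts from the two twisted Ramond partition functions --- is the same as the paper's, though you package the computation fermionically (a $\bZ/2$-orbifold of $24$ free fermions, with Clifford modules of zero modes) where the paper bosonizes and works with the lattice VOA for $D_{12}^+$, writing $Z^{\NS}_{\rR g}$ and $Z^{\rR}_{\rR g}$ as signed combinations of the twisted theta functions $\Theta_{D_{12}+\vec\lambda}$ and $\Theta_{D_{12}+\vec\varsigma+\vec\lambda}$ (Lemma~\ref{prop:finalformulae}). These presentations are equivalent (Jacobi triple product), and your two $F(24)$-sectors with $\bZ/2$-projection are exactly the paper's two theta summands, so the per-class arithmetic would come out the same either way.

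The genuine gap is the step you explicitly defer: resolving the sign and lift ambiguities. The Frame shape determines only the $\rO_{24}$-class of a lift $\tilde g$, while the twisted Ramond characters depend on its class in $\Spin_{24}$ --- a four-fold ambiguity ($\lambda_{12}\mapsto\lambda_{12},\,-\lambda_{12},\,1+\lambda_{12},\,1-\lambda_{12}$). You correctly observe that the split between the $66$ conspiratorial and $6$ nonconspiratorial classes is decided precisely here, but you offer no procedure for pinning it down; ``spend the bulk of the effort on sign conventions'' is not an argument. The paper's one non-mechanical input is Lemma~\ref{framshapeZrr}: since the Witten index $Z^{\rR}_{\rR g}$ is constant in $q$, it is invariant under the $S$-transformation, which exchanges twisting and twining, so it equals the leading coefficient of $Z^{\rR g}_{\rR}$, namely $\pm\tr_{\bR^{24}}(\tilde g)=\pm k_1$, readable directly off the Frame shape by \eqref{eqn:trace}. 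One then evaluates all four candidate lifts and keeps the unique one reproducing this integer; without such an anchor your table cannot be trusted at exactly the boundary cases you single out. A smaller point: Lemma~\ref{prof:fermionzeromode} never applies to $V^{f\natural}$, whose supersymmetry-preserving automorphism group $\Co_1$ is finite, so no class commutes with a continuous supersymmetry-preserving symmetry; the mechanism you have in mind (an unpaired fermion zero mode in the twisted sector pairing bosonic with fermionic ground states) is a different, trace-based phenomenon, and by itself it cannot distinguish $b_g=f_g\neq 0$ from spontaneous supersymmetry breaking.
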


The existence of six nonconspiratorial conjugacy classes means that there cannot be an easy universal explanation for why so many conjugacy classes in $\Co_1$ are conspiratorial. Rather, we believe that explaining our result will shed new light on Conway moonshine. In lieu of an explanation, we offer simply an observation.
The six nonconspiratorial classes are (in the standard ATLAS naming convention):
$$ 3C, \quad 4B, \quad 5C, \quad 6F, \quad 8D, \quad 9B.$$
These are some (but not all) of the small number of conjugacy classes in $\Co_1$ whose Frame shapes are unbalanced (compare \cite{Theo}). The latter three classes are some (but not all) of the square and cube roots of the first two classes: $6F^2 = 3C$, $8D^2 = 4B$, and $9B^3 = 3C$.
Moreover, classes $3C$ and $5C$, together with class $2A$ (which is conspiratorial), are particularly special among conjugacy classes in $\Co_1$:  these are the only classes with centralizers of shape $(\text{extrapsecial}).(\text{almost simple})$; and these are the only classes $g$ of prime order for which the (twisted) orbifold $V^{f\natural} \sslash \langle g\rangle$ is the supersymmetric $E_8$ lattice VOA used in the original construction of $V^{f\natural}$ in \cite{duncanPHD}.

The structure of this article is as follows. In Section~\ref{sec:preliminaries}, we briefly review some definitions and explain why $g$-twisted Ramond sectors of SCFTs are naturally SQMs. We moreover explain why typically these SQMs are far from conspiratorial: not only are there usually both bosonic and fermionic ground states, but the same number of them. In Section~\ref{sec:methods} we outline the methods we used to do the  computations summarized in our Main Theorem. Finally, Section~\ref{sec:results} tabulates the detailed outputs of those computations. Our computer code can be found in the Ancillary Files section of this article's arXiv posting.

\section{Preliminaries}\label{sec:preliminaries}
In this section, we briefly review some definitions and explain why $g$-twisted Ramond sectors of SCFTs are naturally SQMs.

\subsection{SQM models and Witten index}

A \emph{quantum mechanics} (QM) model consists of a complex Hilbert space $\cH$ equipped with a self-adjoint operator $\hat{H}$ called the \emph{Hamiltonian}. 
We will abusively write simply $\cH$ for the QM model $(\cH,\hat{H})$.
The Hamiltonian is typically unbounded, but we require that its spectrum is bounded below. In this case, the \emph{Euclidean time-evolution operator}
$$ U_s = e^{-s\hat{H}}, \quad s \in \mathbb{R}_{\geq 0}$$
is bounded for all $s \geq 0$. The QM model $\cH$ is called \emph{compact} if $U_s$ is moreover trace-class for all $s > 0$. The \emph{partition function} of a compact QM model is
$$Z(\cH) : s \mapsto \tr_{\mathcal H} \bigl(U_s\bigr), \quad s > 0. $$
 Note that it is common to work with an exponentiated coordinate $q:= e^{-s}$, in which case compactness of $\cH$ says that $Z(\cH)(q)$ converges for $|q|<1$.
 
A \emph{fermionic} QM model is just as above, but instead of an ordinary Hilbert space, $\mathcal{H} = \mathcal{H}_{\bar 0} \oplus \mathcal{H}_{\bar 1}$ is a super (aka $\mathbb Z_2$-graded) Hilbert space. The operator which is $+1$ on the even, aka bosonic, summand $\mathcal{H}_{\bar 0}$ and $-1$ on the odd, aka fermionic, summand $\mathcal{H}_{\bar 1}$ is called $(-1)^f$. The Hamiltonian in a fermionic QM model is required to be even, i.e.\ to commute with $(-1)^f$. A compact fermionic QM model has two partition functions:
\begin{gather}
\label{eqn:ZNS}  Z^{\NS}(\cH) : s \mapsto \tr_{\mathcal  H} \bigl(U_s\bigr), \\
\label{eqn:ZR}  Z^{\rR}(\cH) : s \mapsto \tr_{\mathcal H} \bigl((-1)^f U_s\bigr).
\end{gather}
The first is called the \emph{Neveu--Schwarz partition function} and the second is called the \emph{Ramond partition function}. Note that this trace is the trace on the underlying non-super Hilbert space of $\mathcal H$; the operation $X \mapsto \tr \bigl((-1)^f X\bigr) $ is often referred to as the \emph{supertrace} $\str(X)$. One could more generally pick any automorphism $g \in \Aut(\cH)$, and study its character
$$ Z^g(\cH) : s \mapsto \tr_{\mathcal  H} \bigl(gU_s\bigr),$$
(in which case, to match notation, we should simply write $\rR$ for $(-1)^f$).

As mentioned already in the introduction, \emph{supersymmetric} QM (aka SQM) models are equipped with a self-adjoint \emph{odd} operator $\hat G$ whose square is proportional (by a fixed convention-dependent constant) to $\hat{H}$. More precisely, if there is one such chosen $\hat G$, then the model is said to have \emph{$N=1$ supersymmetry}; we will never use more than one supersymmetry, and so we will not write ``$N=1$'' in the sequel.
The following result is well-known:

\begin{prop}[\cite{witten}] \label{prop:constant}
    In any compact SQM model $\cH$, the Ramond partition function $Z^{\rR}(\cH)$ is an integer (and independent of $s$).
\end{prop}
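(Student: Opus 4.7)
The plan is to exploit the standard pairing argument: $\hat{G}$ sets up an isomorphism between bosonic and fermionic states in every positive-energy eigenspace, so those eigenspaces contribute zero to the supertrace, leaving only the ground-state contribution which is manifestly an $s$-independent integer.

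First I would decompose $\cH = \bigoplus_E \cH^E$ into $\hat H$-eigenspaces. Since $\hat H$ is even and compactness guarantees $e^{-s\hat H}$ is trace-class for $s>0$, each $\cH^E$ is finite-dimensional and itself splits as $\cH^E = \cH^E_{\bar 0} \oplus \cH^E_{\bar 1}$. Because $\hat G$ is odd and commutes with $\hat H$ (as $[\hat G,\hat H] \propto [\hat G, \hat G^2]=0$), it restricts to an odd operator $\hat G|_{\cH^E} : \cH^E_{\bar 0} \to \cH^E_{\bar 1}$ and $\cH^E_{\bar 1}\to \cH^E_{\bar 0}$. On $\cH^E$ with $E>0$ we have $(\hat G|_{\cH^E})^2 = cE \cdot \mathrm{id}$ for a nonzero constant $c$, so $\hat G|_{\cH^E}$ is invertible (in particular, a linear isomorphism between the bosonic and fermionic summands). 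Hence $\dim \cH^E_{\bar 0} = \dim \cH^E_{\bar 1}$ for every $E>0$, and
$$\str_{\cH^E}(U_s) = (\dim \cH^E_{\bar 0} - \dim \cH^E_{\bar 1})\, e^{-sE} = 0.$$

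Summing over $E$, the only surviving term is the $E=0$ contribution, for which $e^{-sE}=1$, giving
$$ Z^{\rR}(\cH)(s) = \dim \cH^{0}_{\bar 0} - \dim \cH^{0}_{\bar 1}. $$
The right-hand side is a difference of nonnegative integers (the dimensions are finite by compactness, since $U_s$ being trace-class forces each eigenspace, in particular the $E=0$ one, to be finite-dimensional) and has no dependence on $s$, which is exactly the claim.

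The only subtlety I would be careful about is the technical one of domains: $\hat G$ and $\hat H$ are generally unbounded, so decomposing into eigenspaces and restricting $\hat G$ to each eigenspace requires knowing that the $\cH^E$ lie in $\mathrm{dom}(\hat G)$. This is standard for self-adjoint operators with discrete spectrum — one uses the spectral theorem applied to $\hat H$, and the fact that $\hat G$ commutes with bounded functions of $\hat H$ — so the pairing argument above works rigorously rather than only formally. This is the main point to address carefully, but it is routine once compactness (i.e.\ trace-class $U_s$) is in hand.
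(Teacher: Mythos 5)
Your proof is correct and follows essentially the same route as the paper's: decompose into $\hat H$-eigenspaces, use that the odd operator $\hat G$ squares to a nonzero multiple of the identity on each positive-energy eigenspace to pair bosons with fermions, and conclude that only the $E=0$ eigenspace contributes to the supertrace. The extra care you take about domains and trace-class issues is a welcome refinement but does not change the argument.
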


The integer $Z^{\rR}$ is called the \emph{Witten index} of the SQM model.

\begin{proof}  Let $\cH^{\lambda} \subset \cH$ denote the eigenspace with eigenvalue $\hat{H} = \lambda$.  Since $\mathcal{H}$ is a super Hilbert space and $\hat{H}$ preserves the super structure, $\mathcal{H}^{\lambda}$ is also a super space, and decomposes as a sum of bosonic and fermionic pieces, $\mathcal{H}^{\lambda}= \mathcal{H}^{\lambda}_{\bar 0} \oplus \mathcal{H}^{\lambda}_{\bar 1}$. The partition function is then  
$$ Z(\cH)(q) = \sum_{\lambda}q^\lambda (\dim(\mathcal{H}^{\lambda}_{\bar 0}) \pm \dim(\mathcal{H}^{\lambda}_{\bar 1})),$$ where addition is Neveu-Schwarz and subtraction is Ramond. 

Since $\hat{G}$ commutes with $\hat{H}$, it preserves the eigenspaces; since it is odd, it maps $\hat{G}|_{\cH^\lambda} : \mathcal{H}^{\lambda}_{\bar 0} \leftrightarrows \mathcal{H}^{\lambda}_{\bar 1}$. The composition of these two maps is $\hat{H} = \lambda$. In particular, if $\lambda \neq 0$ then $\hat{G}^2$ is an isomorphism so $\hat{G}$ is also an isomorphism. In particular $\dim(\mathcal{H}^{\lambda}_{\bar 0}) = \dim(\mathcal{H}^{\lambda}_{\bar 1})$ when $\lambda \neq 0$, and so
$$ Z^{\rR}(\mathcal{H}) = q^0 \bigl(\dim(\mathcal{H}^{\lambda}_{\bar 0}) - \dim(\mathcal{H}^{\lambda}_{\bar 1})\bigr),$$
which is an $s$-independent integer.
\end{proof}

If $\hat{G}$ exists, then all eigenvalues of $\hat{H}$ are non-negative and so
$Z^{\NS}(\cH)(q)$ is  smooth at $q=0$, equivalently $s = +\infty$. Moreover, its constant term  $Z^{\NS}(\cH)(q=0)$ counts (without sign) the number of states of eigenvalue $0$. Thus we can retrieve the number of boson and fermion ground states via: 
\begin{gather}
\label{eqn:countB}    \#\text{bosons} = \frac{Z^{\NS}(\cH)(q=0) + Z^{\rR}(\cH)}{2}, \\
\label{eqn:countF}    \#\text{fermions} = \frac{Z^{\NS}(\cH)(q=0) - Z^{\rR}(\cH)}{2}.
\end{gather}

\subsection{Holomorphic SCFTs}

In order to produce interesting SQM models, we will work with a specific holomorphic superconformal field theory $V^{f\natural}$ first discovered by J.\ Duncan \cite{duncanPHD,DunSMforConLargGrp}. We will not review the precise definition of ``holomorphic conformal field theory.'' Suffice it to say that a \emph{holomorphic CFT} is a (full) conformal field theory in which all vertex operators have only holomorphic dependence on the location of insertion. Holomorphic CFTs have a nice axiomatization in terms of vertex operator algebras (VOAs): they are precisely the (unitary and of CFT-type) VOAs $V$ such that the category $\Mod(V)$ of vertex modules is semisimple with a unique simple object (namely $V$ itself). We will without further comment allow our VOAs to have both bosonic and fermionic elements. Unitarity enforces the spin-statistics relation, which says that bosonic vertex operators have (non-negative) integral conformal weights whereas fermionic vertex operators have conformal weights in $\bN + \frac12$.

Suppose given a VOA $V$ and a finite-order automorphism $g \in \Aut(V)$. Then there is a well-studied category $\Mod_g(V)$ of \emph{$g$-twisted modules} (we recommend the textbook \cite{IntrotoVAandTheirRep}). Holomorphicity of $V$ implies that all of the categories $\Mod_g(V)$ are semisimple with a unique simple module. This unique simple module is called the \emph{$g$-twisted sector} of $V$, and we will denote it $V_g$. Note that the words ``the'' and ``unique'' are slightly abusive: $V_g$ is well-defined up to non-canonical, possibly-odd isomorphism. (In particular, the overall parity of $V_g$ is not well-defined.)

A special case is when $g = (-1)^f$ is the canonical parity operator on $V$. The $(-1)^f$-twisted sector is then conventionally called the \emph{Ramond sector} $V_\rR$ of $V$. (Thus, to match notation, we should write simply $\rR$ for $(-1)^f$.) The untwisted ($g=1$) sector is conventionally called the \emph{Neveu--Schwarz sector} $V = V_{\NS}$. More generally, given $g \in \Aut(V)$, we will refer to $V_g = V_{\NS g}$ as the \emph{$g$-twisted Neveu--Schwarz sector} and $V_{(-1)^fg} = V_{\rR g}$ as the \emph{$g$-twisted Ramond sector}.

The \emph{zero-mode} of a bosonic vertex operator $X(-)$ is the operator $X_0 : V \to V$ defined by
\begin{equation}\label{eqn:zeromode}
X_0[Y] := \oint_{z \in U(1)} \frac{1}{2 \pi i z}X(z) Y(0) \mathrm{d} z.
\end{equation}
 This formula makes sense on the $g$-twisted sector $V_g$ provided that $g$ preserves $X$. If $X$ is fermionic, then $X(z)$ is anti-periodic if $z$ goes around a circle with even spin structure, but periodic if $z$ goes around a circle with odd spin structure; the upshot is that $X_0$ is defined on $V_\rR$, and on $V_{\rR g}$ if $g$ preserves $X$, but not on $V_{\NS}$. 
 
 A special case is when $X = L$ is the conformal vector, which (by definition) is preserved by all symmetries $g \in \Aut(V)$. Its zero mode $L_0$ is then a reasonable choice of Hamiltonian for a quantum mechanics model with Hilbert space $V_{g}$. Actually, although it induces the correct time evolution on the projective space of states, $L_0$ is not quite the Hamiltonian usually used. Rather, the Hamiltonian on $V_g$ is always taken to be $\hat{H} := L_0 - \frac{c}{24}$, where $c$ is the central charge of the CFT $V$. The  shift by the constant $\frac{c}{24}$ can be explained by carefully computing how the physics transforms when changing variables between rectilinear and polar coordinates (the buzzword is ``Schwarzian derivative''); we refer the reader to \cite{YellowBook} for further discussion.

Applying (\ref{eqn:ZNS}--\ref{eqn:ZR}) to the Neveu--Schwarz and Ramond sectors supplies partition functions
\begin{align*}
  Z^{\NS}_{\NS}(V) &= \tr_{V_\NS}\bigl( q^{L_0 - c/24}\bigr), &
  Z^{\rR}_{\NS}(V) &= \str_{V_\NS}\bigl( q^{L_0 - c/24}\bigr), \\
  Z^{\NS}_{\rR}(V) &= \tr_{V_\rR}\bigl( q^{L_0 - c/24}\bigr), &
  Z^{\rR}_{\rR}(V) &= \str_{V_\rR}\bigl( q^{L_0 - c/24}\bigr).
\end{align*}
More generally, we can use any twisting. We will be most interested in the $g$-twisted Ramond sectors:
$$ Z^{\NS \text{ or }\rR}_{\rR g}(V) = Z^{\NS \text{ or }\rR}(V_{\rR g}) = (\tr \text{ or } \str)_{V_{\rR g}}\bigl( q^{L_0 - c/24}\bigr).$$
We remind that $V_{\rR g}$ is well-defined only up to a possibly-odd isomorphism. This does not affect the value of $Z^\NS_{\rR g}$, but it means that $Z^{\rR}_{\rR g}$ is well-defined only up to a sign. 

In Section~\ref{sec:Frame}, we will use more generally the \emph{twisted and twined} Ramond--Ramond partition function, defined as follows. In general, for any $g,h \in \Aut(V)$, the action of $v \in V$  on $V_g$ via $v \mapsto h(v)$ instead of the usual action writes $V_g$ as a simple $hgh^{-1}$-twisted $V$-module. Thus $h$ supplies an isomorphism $V_g \cong V_{hgh^{-1}}$, well defined up to a $\rU(1)$-phase ambiguity. Now suppose that $g$ and $h$ commute. Then $V_{hgh^{-1}}$ is literally equal to $V_g$, and so $h$ supplies an automorphism of $V_g$, again well defined up to a $\rU(1)$-phase ambiguity. The character of this automorphism is the twisted and twined partition function:
\begin{equation}
\label{eqn:twined} Z^{\rR h}_{\rR g}(V) = \str_{V_{\rR g}}\bigl( h q^{L_0 - c/24}\bigr) = \tr_{V_{\rR g}}\bigl((-1)^f h q^{L_0 - c/24}\bigr).\end{equation}

A \emph{supersymmetry} in a CFT of central charge $c$ is a vertex  operator $\hat{G}(-)$ of conformal weight $\frac32$ solving the OPE
\begin{equation} \label{eqn:scft} \hat{G}(z)\hat{G}(0) \sim \frac{\frac23 c}{z^3} + \frac{2L(0)}z\end{equation}
where $L(-)$ is the stress-energy vertex operator. Equation 
 \eqref{eqn:scft} implies that $\hat{G}_0^2 \propto L_0 - \frac{c}{24} = \hat{H}$. In other words, for each automorphism $g$ of the SCFT $(V, \hat{G})$, the $g$-twisted Ramond sector $V_{\rR g}$ is naturally an SQM model. Note that the supersymmetry is odd, and so \emph{not} preserved by $(-1)^f$. In particular, $V_{\rR g} \neq V_{\NS h}$ for any supersymmetry-preserving $h$; indeed, the latter is not naturally an SQM model.

Quite often the SQM model $V_{\rR g}$ has vanishing Witten index:

\begin{prop}\label{prof:fermionzeromode}
  Suppose that $V$ is a holomorphic SCFT and that $g \in \Aut(V)$ commutes with some the continuous symmetry. Then the $g$-twisted Ramond sector $V_{\rR g}$ has an equal number of bosonic and fermionic ground states.
\end{prop}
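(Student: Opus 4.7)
The plan is to construct on $V_{\rR g}$ an odd operator that commutes with the Hamiltonian $\hat{H}$ and squares to a nonzero constant; such an operator automatically pairs bosonic and fermionic ground states bijectively, giving the claim. This operator will be the zero mode of a weight-$\tfrac12$ fermionic vertex operator $\psi$ obtained as the supersymmetric partner of the weight-one current $J$ that generates the continuous symmetry.

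The main steps are as follows. First, translate the continuous symmetry into vertex-algebraic data: its infinitesimal generator is $J_0$ for some bosonic weight-one vector $J \in V_1$, and $J$ is $g$-invariant because $g$ commutes with the 1-parameter subgroup that $J_0$ generates. Second, use the hypothesis that this symmetry preserves the supersymmetry: the condition $[J_0, \hat{G}(w)] = 0$ forces the simple pole in $\hat{G}(z)J(w)$ to vanish, leaving by weight-counting only a $(z-w)^{-2}$ term with a weight-$\tfrac12$ fermionic coefficient $\psi$. Since $g$ fixes both $\hat{G}$ and $J$, it fixes $\psi$, so the corresponding fermionic vertex operator has integer-indexed modes on the $g$-twisted Ramond sector and its zero mode $\psi_0$ is a well-defined odd operator on $V_{\rR g}$ commuting with $L_0$, hence with $\hat{H} = L_0 - c/24$. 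Third, compute the algebra of $\psi_0$: the current-current OPE $J(z)J(w) \sim k/(z-w)^2 + \ldots$ has $k > 0$ by unitarity of $V$ (positive definiteness of the Gram form on $V_1$ applied to the nonzero current $J$), and transferring this through the supersymmetric structure yields a self-OPE $\psi(z)\psi(w) \sim k/(z-w) + \ldots$, so that $\psi_0^2 = k/2 \ne 0$. Restricting $\psi_0$ to the finite-dimensional ground-state space $\ker \hat{H} \subset V_{\rR g}$, invertibility of this odd operator forces its bosonic and fermionic eigenspaces to have equal dimensions.

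The hard part is verifying the nonvanishing $\psi = \hat{G}_{1/2} J \ne 0$ and pinning down the OPE constants. The degenerate scenario $\psi = 0$ corresponds to $J$ having totally regular OPE with $\hat{G}$, which morally means $V$ decomposes as a tensor product of a purely bosonic holomorphic CFT containing $J$ with a genuine SCFT; in such a reducible situation the conclusion of the proposition can fail. I would address this either by imposing the mild hypothesis that $V$ contains no bosonic CFT tensor factor, or by a unitarity argument: for any nontrivial supersymmetry-preserving continuous symmetry the corresponding $J$ must pair nontrivially with $\hat{G}_{1/2}$, because otherwise $\hat{G}_{-3/2} J$ would generate a null subrepresentation incompatible with positivity of the inner product on $V$. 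Once $\psi \ne 0$ is established, the remaining algebra is a routine application of the Ramond-sector zero-mode formalism summarized in the preceding subsection.
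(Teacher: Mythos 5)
Your strategy coincides with the paper's: extract from the $g$-invariant current $J$ its weight-$\tfrac12$ superpartner $\psi=\hat G_{1/2}J$, note that $\psi_0$ is an odd operator on $V_{\rR g}$ commuting with $\hat H$ and squaring to a nonzero scalar, and conclude that bosons and fermions pair up in every energy level including zero. The problem is that the step you yourself flag as ``the hard part'' --- the nonvanishing of $\psi$ --- is precisely the content of the paper's proof, and neither of your proposed fixes closes it. Adding the hypothesis that $V$ has no bosonic tensor factor proves a weaker statement than the one asserted; and the null-vector argument fails outright, since $\|\hat G_{-3/2}J\|^2=\langle J,\{\hat G_{3/2},\hat G_{-3/2}\}J\rangle=\bigl(2+\tfrac{2c}{3}\bigr)\|J\|^2>0$, so $\hat G_{-3/2}J$ is never null. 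The correct resolution comes from the definition of a supersymmetry, equation \eqref{eqn:scft}: because the simple pole of $\hat G(z)\hat G(0)$ is the \emph{full} stress tensor, one has $\{\hat G_{-1/2},\hat G_{1/2}\}=2L_0$; combining this with the preservation condition $J_0[\hat G]=0$, which by skew-symmetry reads $\hat G_{-1/2}J=L_{-1}\psi$, gives $\hat G_{-1/2}\psi=2J-\hat G_{1/2}L_{-1}\psi=2J-\hat G_{-1/2}\psi$, so that $\hat G_{-1/2}\psi$ is a nonzero multiple of $J$ --- this is the paper's \eqref{eqn:susydescent} --- whence $\psi\neq0$. Your worried scenario $V=V_{\mathrm{bos}}\otimes W$ is dissolved by the same observation: there $\hat G(z)\hat G(0)$ reproduces only the stress tensor of $W$, so $\hat G$ is not a supersymmetry of $V$ in the sense of \eqref{eqn:scft}, and no extra hypothesis is needed.

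Two smaller points. Your normalization $\psi(z)\psi(w)\sim k/(z-w)$ with $k$ the level of the $J$--$J$ OPE is only correct when $J$ is (skew-)self-conjugate; for a complex current, e.g.\ a root vector of a nonabelian symmetry, both self-OPEs can be regular and $\psi_0^2$ can vanish. The paper avoids this by pairing $\psi$ with its complex conjugate and using $\psi_0\bar\psi_0=\|\psi\|^2>0$, which still yields an invertible odd endomorphism of the ground-state space; you should do the same, or normalize $J$ to be the skew-adjoint generator of a compact one-parameter group. Finally, the simple pole that the preservation hypothesis kills is that of $J(z)\hat G(w)$, not of $\hat G(z)J(w)$, whose simple-pole coefficient is $\hat G_{-1/2}J=L_{-1}\psi\neq0$; this slip is harmless to the argument but worth correcting.
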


\begin{proof}
  Noether's theorem for holomorphic SCFTs identifies the continuous symmetries of $V$ with the vertex operators of spin $1/2$. The way this works is as follows. Noether's theorem for non-super holomorphic CFTs is well-known: the continuous symmetries are precisely the operators $X_0$ for vertex operators $X(-)$ of spin $1$. Now suppose that $X_0$ preserves the supersymmetry $\hat{G}$. In other words, $X_0[\hat{G}] = 0$, so that the full OPE of $X$ with $\hat{G}$ is
  $$ X(z) \hat{G}(0) \sim  \frac{Y(0)}{z^2} + \frac{0}{z}$$
  for some vertex operator $Y$ of spin $1/2$. Moreover, \eqref{eqn:scft} implies that 
  \begin{equation} \label{eqn:susydescent} \hat{G}(z) Y(0) \sim \frac{2X(0)}z.\end{equation}
   Conversely, any spin $1/2$ vertex operator $Y$ selects a spin $1$ operator $X$ via \eqref{eqn:susydescent}, and the corresponding continuous symmetry fixes $\hat{G}$.

  Now suppose that the continuous symmetry generated by $Y$ commutes with $g \in \Aut(V)$. Then $Y_0$ is defined on  $V_{\rR g}$.  
  Because $Y$ has spin $1/2$, its self-OPE with its complex conjugate $\bar{Y}$ is
  $$ Y(z) \bar{Y}(0) = \frac\lambda z$$
  for some (positive, hence nonzero) constant $\lambda = \|Y\|^2$. But then $Y_0 \bar{Y}_0 = \lambda \neq 0$, and since $Y$ is odd, $Y_0$ provides an isomorphism between the bosonic and fermionic states, of any energy including $0$, in $V_{\rR g}$.
\end{proof}


\section{Calculations} \label{sec:methods}

We turn now to calculating the partition functions $Z^\NS_{\rR g}(V^{f\natural}) = Z^\NS(V^{f\natural}_{\rR g})$ and $Z^\rR_{\rR g}(V^{f\natural}) = Z^\rR(V^{f\natural}_{\rR g})$ of the $g$-twisted Ramond sector of $V^{f\natural}$, as these will supply a count of the ground states via (\ref{eqn:countB}--\ref{eqn:countF}). 

If one forgets the supersymmetry, then the automorphism group of $V^{f\natural}$ is the group $\SO^+_{24}$, which is by definition the image of $\Spin_{24}$ in the positive half-spin representation. This is a group with the same Lie algebra as $\SO_{24}$, but a different global form: it is  the unique, up to the outer automorphism, double cover of $\mathrm{PSO}_{24}$ other than $\SO_{24}$; and it is the unique, up to the outer automorphism, $\bZ_2$-quotient of $\Spin_{24}$ other than $\SO_{24}$. In particular, twisted (and twined) partition functions of $V^{f\natural}$ make sense for all finite-order elements of $\SO^+_{24}$. We will explain a little bit more about how $\SO^+_{24}$ arises in Section~\ref{sec:thetaseries}. Although the automorphism group of $V^{f\natural}$ is just $\SO^+_{24}$, it is convenient to describe elements in terms of (either of their) lift(s) to $\Spin_{24}$.

The main theorem of \cite{DunSMforConLargGrp} states that $V^{f\natural}$ admits a unique-up-to-automorphism supersymmetry, and that the subgroup of $\SO^+_{24}$ that preserves this supersymmetry is a copy of Conway's largest sporadic group $\Co_1$. To help orient the reader, we note that the double cover $\Spin_{24} \to \SO^+_{24}$ restricts over $\Co_1$ to the double cover $\Co_0 \to \Co_1$, and that $\Co_0 \subset \Spin_{24}$ maps injectively along $\Spin_{24} \to \SO_{24} \subset \rO_{24}$. As a subgroup of $\rO_{24}$, the group $\Co_0$ arises as the group of automorphisms of the Leech lattice $\Lambda_{24} \subset \bR^{24}$.

\subsection{Frame shapes and conjugacy classes} \label{sec:Frame}

Frame shapes (introduced in \cite{frameog}) generalize cycle structures of permutations, where one writes a formal product like $1^2 2^1 3^1$ to indicate the conjugacy class of a permutation $(1)(2)(34)(567) \in S_7$. Frame shapes are defined as follows. Suppose that $g \in \rO_n$ has finite order $o(g) < \infty$ and that there exists a basis in which the matrix for $g$ has all entries in $\bQ$ --- in examples, this happens because $g$ preserves a lattice $\bZ^n \hookrightarrow \bR^n$. Then the characteristic polynomial $p(x) = \det(x - g)$ has coefficients in $\bQ$. On the other hand, because $g$ has finite order, its eigenvalues are all roots of unity. It follows that the irreducible-over-$\bQ$ factors of $p(x)$ are cyclotomic polynomials. Equivalent $p(x)$ factors (uniquely!) into a product of shape
$$ p(x) = \prod_{d | o(g)} (1-x^d)^{k_d},$$
where some of the exponents $k_d$ may be negative. We remark for future reference that the $x^1$ term in this product is
\begin{equation}
\label{eqn:trace} \tr_{\bR^n}(g) = k_1.
\end{equation}
The \emph{Frame shape} of $g$ is then defined to be the formal symbol
$$ \pi_g:=\prod_{d| o(g)} d^{k_d}.$$
Because Frame shapes exactly encode characteristic polynomials, they are a convenient and compact way of recording $\rO_n$-conjugacy classes.
In~\cite{kondo}, the Frame shapes for elements of $\Co_0 \subset \rO_{24}$ were calculated, resulting in 160 Frame shapes out of the 167 conjugacy classes.

We will need to modify this encoding in two ways. First, our formulas do not directly reference the Frame shape of $g$, but rather ($1/2\pi i$ times) the logarithms of the eigenvalues of $g$.
These log-eigenvalues can be quickly extracted from the Frame shape as follows. A general element of $\SO_{24}$ can be ``diagonalized'' to a block-diagonal matrix of shape
\[
    \left[
    \begin{array}{ccc}
     \left[
    \begin{array}{cc}  
    \cos(2\pi \lambda_1) &\sin(2\pi \lambda_1) \\ -\sin(2\pi \lambda_1) & \cos(2\pi \lambda_1)
    \end{array}
    \right]                                  \\& \ddots \\ &   &  \left[
    \begin{array}{cc}  
    \cos(2\pi \lambda_{12}) & \sin(2\pi \lambda_{12}) \\ -\sin(2\pi \lambda_{12}) & \cos(2\pi \lambda_{12})
    \end{array}
    \right]   
    \end{array}
    \right],
\] where each block is a rotation matrix. The $\rO_{24}$-conjugacy class is determined by the total log-eigenvalue $\vec\lambda \in (\bR/\bZ)^{12}$, up to permuations of the entries and sign-flips $\lambda _i \mapsto -\lambda_i \pmod 1$. The full list of log-eigenvalues $(\lambda_1,-\lambda_1,\lambda_2,\dots,-\lambda_{12})$ consists simply of the logarithms of the roots of the characteristic polynomial $\prod_{d | o(g)} (1-x^d)^{k_d}$. Each factor $d^{k_d}$ in the Frame shape contributes $k_d$-many copies of each of the log-eigenvalues $0,\frac 1 d, \frac 2 d, \dots, \frac{d-1}d \mod 1$ to the full list. Naturally, negative exponents entries in the Frame shape contribute negatively many  copies of the log-eigenvalues. After taking producing the full list of $24$ numbers, we then record just $12$ numbers by taking, for each $i$, whichever of $\lambda_i$ is in the interval $[0,\frac12] \subset \bR/\bZ$.

Second, given an element $g\in \Co_0$, our formulas depend not just on the $\rO_{24}$-conjugacy class of $g$, but on its conjugacy class in $\Spin_{24}$ --- because $\rH^1(\Co_0; \bZ_2) = \rH^2(\Co_0; \bZ_2) = 0$, the inclusion $\Co_0 \subset \rO_{24}$ lifts uniquely to an inclusion $\Co_0 \subset \Spin_{24}$. The fusion of conjugacy classes along $\Spin_{24} \to \rO_{24}$ is typically four-to-one:
\begin{enumerate}
    \item Whereas the $\rO_{24}$-conjugacy class of $\vec\lambda = (\lambda_1,\dots,\lambda_{12})$ is unchanged by any sign flip $\lambda_i \mapsto -\lambda_i$, the $\SO_{24}$-conjugacy class is unchanged only by an even number of sign flips. In other words, an $\rO_{24}$-conjugacy class parameterized by a vector $(\lambda_1,\dots,\lambda_{11},\lambda_{12})$ could have come from the $\SO_{24}$-conjugacy class parameterized by $(\lambda_1,\dots,\lambda_{11},\lambda_{12})$ or from the conjugacy class parameterized by $(\lambda_1,\dots,\lambda_{11},-\lambda_{12})$.
  \item Whereas the $\SO_{24}$-conjugacy class depends on the vector $\vec\lambda \in (\bR/\bZ)^{12} = \bR^{12}/\bZ^{12}$, the $\Spin_{24}$-conjugacy class depends on a vector in $\bR^{12} / D_{12}$, where $D_{12} \subset \bZ^{12}$ is the coweight lattice of $\Spin_{12}$, defined by
  $$ D_{12} = \{ \vec z \in \bZ^{12} : \sum z_i \in 2\bZ \}.$$
  In other words, an $\SO_{24}$-conjugacy class  $(\lambda_1,\dots,\lambda_{11},\lambda_{12})$ could have come from the $\Spin_{24}$-conjugacy class  $(\lambda_1,\dots,\lambda_{11},\lambda_{12})$ or from $(\lambda_1,\dots,\lambda_{11},1+\lambda_{12})$.
\end{enumerate}

One reason Frame shapes, equivalently $\rO_{n}$-conjugacy classes, are convenient to work with is that they are comparatively quick to compute from character tables: to work out the characteristic polynomial of a representation requires knowing only the power operations, some elementary symmetric polynomial theory, and a desktop computer to actually do the computations. (That said, we did not do this computation directly; Kondo already did that work in \cite{kondo}, and we simply extracted the log-eigenvalues from the Frame shapes.) For elements of $\Co_0$, it is in principle possible to in fact work out the $\Spin_{24}$-conjugacy classes from a careful analysis of the characters of the spin representations of $\Co_0$, but we did not pursue this route. Rather, we used a guess-and-check method guided by the fact that, for $V^{f\natural}$, we know that $Z^\rR_{\rR g}$ is an integer, and moreover we know which integer it is:
\begin{prop} \label{framshapeZrr}
Suppose that $\pi_g = \prod_d d^{k_d}$ is the Frame shape of any lift $\tilde g \in \Co_0$ of an element $g \in \Co_1$. Then $Z^\rR_{\rR g}(V^{f\natural}) = \pm k_1$. 
\end{prop}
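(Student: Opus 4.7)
The plan is to evaluate the constant $Z^\rR_{\rR g}(V^{f\natural})$ (which is a Witten-type index by Proposition \ref{prop:constant}) via modular $S$-duality, relating it to a twined supertrace on the untwisted Ramond sector which can then be read off directly from Duncan's construction of $V^{f\natural}$. The key preliminary observation is that $Z^{\rR g^{-1}}_\rR(V^{f\natural})$ is \emph{also} a constant in $\tau$: the proof of Proposition \ref{prop:constant} applies verbatim, using that $g^{-1}$ commutes with the supersymmetry $\hat G$, so the $\hat G$-pairing between bosonic and fermionic states in each nonzero-energy eigenspace of $V^{f\natural}_\rR$ is $g^{-1}$-equivariant, making the twined supertrace of each such eigenspace vanish; only the ground states of $V^{f\natural}_\rR$ contribute.

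Next I would invoke modular $S$-duality for twisted-and-twined Ramond partition functions of a $c=12$ holomorphic SCFT. These are modular functions of weight $0$, and the $S$-transformation exchanges the twist and twine labels: $Z^\rR_{\rR g}(-1/\tau) = \zeta \cdot Z^{\rR g^{-1}}_\rR(\tau)$ for some unit-modulus phase $\zeta$ (depending on $g$ but not on $\tau$). Because both sides are $\tau$-independent integers by the previous paragraph, $\zeta \in \{\pm 1\}$, and hence $Z^\rR_{\rR g}(V^{f\natural}) = \pm Z^{\rR g^{-1}}_\rR(V^{f\natural})$.

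Finally I would evaluate the right-hand side directly. By Duncan's construction \cite{duncanPHD, DunSMforConLargGrp}, the Ramond ground states of $V^{f\natural}$ span a 24-dimensional space, all of a single parity (up to the overall parity ambiguity of $V^{f\natural}_\rR$), and as a representation of $\Co_0$ they realize the defining vector representation $\bR^{24}$ inherited from $\Co_0 \subset \SO_{24}$. Hence the constant $Z^{\rR g^{-1}}_\rR(V^{f\natural})$ equals $\pm \tr_{\bR^{24}}(\tilde g^{-1}) = \pm \tr_{\bR^{24}}(\tilde g) = \pm k_1$, using that orthogonal matrices are conjugate to their inverses together with equation \eqref{eqn:trace}. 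Combining with the previous paragraph gives $Z^\rR_{\rR g}(V^{f\natural}) = \pm k_1$, as claimed.

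The main obstacle I anticipate is justifying the modular $S$-transformation of twisted Ramond characters in enough detail to conclude $\zeta \in \{\pm 1\}$. Fortunately only the sign of $\zeta$ matters: since the modular weight vanishes for $c=12$ (no power of $\tau$ appears) and all other sources of phase---namely the parity ambiguities in the definitions of $V^{f\natural}_{\rR g}$ and $V^{f\natural}_\rR$, and any modular cocycles attached to the $g$-twisted $S$-matrix---contribute only rational factors, integrality of both sides forces $\zeta = \pm 1$, which is all that the statement requires.
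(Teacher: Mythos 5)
Your proposal is correct and follows essentially the same route as the paper: use the modular $S$-transformation to trade $Z^\rR_{\rR g}$ for the $g$-twined character of the untwisted Ramond sector, evaluate that via the $24$-dimensional single-parity Ramond ground space carrying the vector representation of $\Co_0$, and use integrality to pin the $\rU(1)$-phase down to $\pm1$. The only (harmless) variation is that you establish constancy of the twined character directly by a $g$-equivariant Witten-index argument, whereas the paper deduces it from the $S$-invariance of the constant $Z^\rR_{\rR g}$.
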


We remind that $k_1 = \tr_{\bR^{24}}(\tilde g)$ by \eqref{eqn:trace}; that the two lifts $\tilde g$ of $g$, and hence their traces, differ only by a sign; and that $Z^\rR_{\rR g}(V^{f\natural})$ is only defined up to a sign.

\begin{proof} 
Zhu's Theorem states that if $V$ is any rational VOA and $M$ is any $V$-module, then $Z(M)$ is a scalar-valued modular function for some subgroup of the modular group $\Gamma = \mathrm{SL}_2(\mathbb{Z})$~\cite{zhu}, and moreover that the various $Z(M)$ are together a vector-valued modular form for the full modular group $\Gamma$. The main theorem by Carnahan and Miyamoto in~\cite{carnahanmiyamoto} implies that the twisted and twined characters (for finite-order automorphisms) of any rational VOA also participate in vector-valued modular forms. 

A special case says that, for any commuting elements $g,h$ and any holomorphic VOA $V$, the $S$-transformation $S = \bigl(\begin{smallmatrix} 0 & 1 \\ -1 & 0 \end{smallmatrix})$, which acts by $\tau \mapsto -1/\tau$ for $q = e^{-i\tau}$, takes
$$ Z^{\rR h}_{\rR g}(V) \overset S \longmapsto  Z^{\rR g}_{\rR h}(V)$$
up to an overall $\rU(1)$-phase ambiguity; see \eqref{eqn:twined} and the preceding discussion.

Specializing $V = V^{f\natural}$, and hence $c = 12$, and setting $h=1$, we find that $S$ takes
$$ Z^\rR_{\rR g}(V^{f\natural}) \overset S \longmapsto Z^{\rR g}_\rR (V^{f\natural})= \tr_{V^{f\natural}_\rR}\bigl( (-1)^f g q^{L_0 - c/24}\bigr).$$
But $Z^\rR_{\rR g}(V^{f\natural})$ is an integer by Lemma~\ref{prop:constant}, and so is unchanged under $S$-transformations. On the other hand, 
$$V_\rR^{f\natural} = \bR^{24} \oplus (\cdots)$$
where $L_0$ acts on $\bR^{24}$ with eigenvalue $\frac12$ and on $(\cdots)$ with higher eigenvalues, and where $\bR^{24}$ is all of one parity. Thus
$$ \tr_{V^{f\natural}_\rR}\bigl( (-1)^f g q^{L_0 - c/24}\bigr) = \tr_{\bR^{24}}(g) + O(q).$$
The result follows.
\end{proof}

All together, the way our algorithm works is that it takes the Frame shapes as calculated by \cite{kondo}, and first extracts the vector $\vec\lambda$ that parameterizes the $\rO_{24}$-conjugacy class. It then tries all four possible variations $\lambda_{12} \mapsto \lambda_{12},-\lambda_{12}, 1+\lambda_{12},1-\lambda_{12} \pmod 2$ in the formula for $Z^\rR_{\rR g}$. Remarkably, we find in each case that only one option produces the correct integer value. Our algorithm then uses this correct option in the formula for $Z^\NS_{\rR g}$. We turn now to describing those formulas.

\subsection{Lattices and $\theta$-series} \label{sec:thetaseries}


There are many constructions of $V^{f\natural}$. The most direct, and the most convenient for our purposes, is to recognize $V^{f\natural}$ as a lattice VOA for the $D_{12}^+$ lattice. By definition, $D_{12}^+$ is the weight (equivalently coweight, as it is self-dual) lattice of $\SO^+_{24}$. It is the extension of $D_{12} = \{\vec z \in \bZ^{12} : \sum_i z_i \in 2\bZ\}$ by the coset through the vector $\vec\varsigma := (\frac12, \frac12, \dots,\frac12)$. The construction that inputs a lattice $\cL$ and produces a VOA $V_\cL$ is well-known, and surveyed for example in \cite{VA}. In general, $\Aut(V_\cL)$ is the Lie group with weight lattice $\cL$. This is one way to see the equality $\Aut(V^{f\natural}) = \SO^+_{24}$.

For our purposes, we don't care about the precise vertex algebra structure on $V_\cL$, but we do care about the grading by $L_0$, so we briefly review the basics. Let $\cL \subset \bR^r$ be a full-rank integral lattice. There is a vertex algebra $\operatorname{Bos}(r) = \operatorname{Bos}(1)^{\otimes r}$ that depends only on $\bR^r$. As a vector space graded by $L_0$, there is a PBW-type isomorphism
$$ \operatorname{Bos}(1) \cong \bC[\phi,\partial \phi, \partial^2\phi, \dots] \cong \bigotimes_{n=1}^\infty \bC[\partial^n\phi]$$
where ``$\partial^n\phi$'' is supposed to be understood as a single symbol, denoting the $n$th generator of this infinitely-generated polynomial ring. The $L_0$-grading is compatible with the polynomial ring structure, and $\partial^n\phi$ has grading $n+1$. The VOA $\operatorname{Bos}(r)$ has central charge $c=r$ and is naturally a (non-rational) sub-VOA of $V_\cL$. Finally, there is also a PBW-type vector space isomorphism
$$ V_\cL \cong \operatorname{Bos}(r) \otimes \bC[\cL] = \bigoplus_{\vec\ell \in \cL} \operatorname{Bos}(r)$$
where the component graded by $\vec\ell$ picks up an extra $L_0$-grading by $\frac{\|\vec\ell\|^2}2$.

The twisted sectors are just as easy to write down. The maximal torus of $\Aut(V_\cL)$ is $\bR^r / \cL$. Assuming $g \in \Aut(V_\cL)$ is in the identity component, it can be conjugated into the maximal torus, and so is parameterized by its log-eigenvalues $\vec\lambda \in \bR^r/\cL$. The corresponding twisted sector $(V_\cL)_g$ is then PBW-isomorphic to
$$ (V_\cL)_g \cong \operatorname{Bos}(r) \otimes \bC[\cL+\vec\lambda] = \bigoplus_{\vec\ell \in \cL+\vec\lambda} \operatorname{Bos}(r)$$
where of course $\cL+\lambda$ just denotes the coset of $\cL$ through $\vec\lambda$, and where the $\ell$'th-component receives a contribution to its $L_0$-grading by $\frac{\|\vec\ell\|^2}2$.

These tensor products are compatible with the $L_0$-grading, and so the partition functions factor:
\begin{align}
\label{eqn:latticepartition} Z^{\NS \text{ or }\rR}_g(V_\cL) & = (\tr\text{ or }\str)_{(V_\cL)_g}\bigl(q^{L_0 - c}\bigr) \\
\notag & = (\tr\text{ or }\str)_{\operatorname{Bos}(1)}\bigl(q^{L_0 - c}\bigr)^r \times (\tr\text{ or }\str)_{\bC[\cL + \vec\lambda]}\bigl(q^{L_0}\bigr)
\\ \notag & = Z(\operatorname{Bos}(1))^r \times \sum_{\vec\ell \in \cL + \vec\lambda} \pm q^{\|\vec\ell\|^2/2}.
\end{align}
Note that $\operatorname{Bos}(1)$ is purely bosonic and so the choice of trace or supertrace is irrelevant. The sign in the sum is all $+1$ in the Neveu--Schwarz case, and $(-1)^{\|\vec\ell\|^2}$ in Ramond case. Actually, different vectors $\vec\lambda$ representing the same class in $\bR^r/\cL$ can lead to different signs, but the difference is uniform in $\ell$: it is just an overall sign-on $Z^\rR$, and in any case, $Z^\rR$ is typically only well-defined up to an overall sign.

It remains to compute $Z(\operatorname{Bos}(1))$ and the sum. The former is straightforward. There is an overall factor of $q^{-1/24}$ coming from the shift by $c/24$. The rest is the graded dimension of a polynomial algebra, or more precisely an infinite product $\bigotimes \bC[\partial^n\phi]$. But the $n$th factor in this product contributes
$$ \tr_{\bC[\partial^n\phi]}(q^{L_0}) = \sum_{m=0}^\infty \tr_{\bC (\partial^n\phi)^m} q^{L_0} = \sum_{m=0}^\infty q^{m(n+1)} = \frac1{1-q^{n+1}},$$
and so
\begin{equation}\label{eqn:bos} Z(\operatorname{Bos}(1)) = q^{-1/24} \prod_{n=0}^\infty \frac1{1-q^{n+1}} = \frac1{\eta}\end{equation}
where $\eta = q^{1/24} \prod_{n=0}^\infty (1-q^{n+1}) = \sqrt[24]{\Delta}$ is Dedekind's eta function.

The final factor depends on the lattice $\cL$ and the vector $\vec\lambda$, and so doesn't have a closed form in general. It is called the \emph{$\vec\lambda$-twisted Theta function} of $\cL$:
$$\Theta^\pm_{\cL + \vec\lambda}(q) := \sum_{\vec\ell \in \cL} (\pm 1)^{\|\vec\ell\|^2} q^{\|\vec\ell+\vec\lambda\|^2/2}.$$
Note that the usual Theta function $\Theta_\cL$ is the special case $\vec\lambda=0$, and that we have included a sign $\pm$ in the notation since we are working with odd lattices: $\Theta^+$ appears in $Z^\NS$ and $\Theta^-$ appears in $Z^\rR$.

In the special case $\cL = D_{12}^+$, we can write a closed formula. This is because $D_{12}^+$ is almost the $\bZ^{12}$-lattice, and so its Theta series almost factor as 12th powers of Theta series for $\bZ$. The twisted Theta series of $\bZ$ are the lower-case twisted theta series:
$$ \theta^\pm_\lambda(q) := \sum_{\ell \in \bZ} (\pm 1)^\ell q^{(\ell + \lambda)^2/2}.$$ 
The original Jacobi theta series arise when $\lambda = 0,\frac12$:
\begin{align*}
  \theta_1 & = \theta^-_{1/2} = 0 & \theta_2 & = \theta^+_{1/2}\\
  \theta_4 & = \theta^-_0 & \theta_3 & = \theta^+_0
\end{align*}
Actually, the usual formula for $\theta_1$ differs from our formula for $\theta^-_{1/2}$ by an overall phase, which in any case doesn't matter because the sum vanishes identically.

Suppose that we cared about $\cL = \bZ^{12}$. Then the twisted Theta function would factor:
$$ \Theta^\pm_{\bZ^{12} + \vec \lambda}(q) = \sum_{\vec\ell \in \bZ^{12}} (\pm 1)^{\|\vec\ell\|^2} q^{\|\vec\ell + \vec\lambda\|^2/2} = \prod_{i=1}^{12} \theta^\pm_{\lambda_i}(q).$$
Observe that the signs of the $\vec\ell$th summand in $\Theta^+$ and $\Theta^-$ agree exactly when $\vec\ell \in D_{12}$ and exactly cancel when $\vec\ell \in \bZ^{12} \smallsetminus D_{12}$. This means that the twisted Theta series for $D_{12}$ is:
\begin{equation}\label{eqn:thetaD12}
 \Theta_{D_{12} + \vec\lambda}(q) = \frac12 \left(\Theta^+_{\bZ^{12}+\vec\lambda}(q) + \Theta^-_{\bZ^{12} + \vec\lambda}(q)\right) = \frac12 \left( \prod_{i=1}^{12} \theta^+_{\lambda_i}(q) + \prod_{i=1}^{12} \theta^-_{\lambda_i}(q)\right).
\end{equation}
Note that $D_{12}$ is even (every $\vec\ell \in D_{12}$ has $\|\vec\ell\|^2 \in 2\bZ$), so there are no signs.

On the other hand, our desired lattice $D_{12}^+$ is nothing but $D_{12}$ together with a shift thereof through $\vec\varsigma := (\frac12,\dots,\frac12)$. Thus
\begin{equation}\label{eqn:thetaD12+} \Theta^\pm_{D_{12}^+ + \vec\lambda} = \Theta_{D_{12} + \vec\lambda} \pm \Theta_{D_{12} + \vec\varsigma + \vec\lambda}.\end{equation}

The very last step is to understand which log-eigenvalues correspond to the canonical parity operator $\rR = (-1)^f \in \Aut(V^{f\natural}) = \SO^+_{24}$. The answer is that $\rR$ can be represented by any vector $\vec\rho \in \bR^{12}$ which has integral dot product with $D_{12}$ and half-integral dot product with $D_{12}^+ \smallsetminus D_{12} = D_{12} + \vec\varsigma$. An example of such a vector is $(0,\dots,0,1)$. Thus if $g$ has log-eigenvalue $\vec\lambda$, then $\rR g$ has log-eigenvalue $\vec\lambda + (0,\dots,0,1)$. Finally, note that such a shift acts very simply on twisted theta functions:
\begin{equation}\label{eqn:shift}
 \theta^\pm_{\lambda+1}(q) = \pm \theta^\pm_\lambda(q)
\end{equation}

Combining equations (\ref{eqn:latticepartition}--\ref{eqn:shift}) then yields our final formula:
\begin{prop}\label{prop:finalformulae}
  Suppose (a lift to $\Co_0$ of)  $g \in \Co_1$ has log-eigenvalues $\vec\lambda \in \bR^{12}/D_{12}$. Then
  \begin{align*}
    Z^\NS_{\rR g}(V^{f\natural}) &= \frac12 \left( \prod_{i=1}^{12} \theta^+_{\lambda_i}(q) - \prod_{i=1}^{12} \theta^-_{\lambda_i}(q) + \prod_{i=1}^{12} \theta^+_{\lambda_i+\frac12}(q) - \prod_{i=1}^{12} \theta^-_{\lambda_i+\frac12}(q)\right) \\
    Z^\rR_{\rR g}(V^{f\natural}) &= \frac12 \left( \prod_{i=1}^{12} \theta^+_{\lambda_i}(q) - \prod_{i=1}^{12} \theta^-_{\lambda_i}(q) - \prod_{i=1}^{12} \theta^+_{\lambda_i+\frac12}(q) + \prod_{i=1}^{12} \theta^-_{\lambda_i+\frac12}(q)\right)
  \end{align*}\qed
\end{prop}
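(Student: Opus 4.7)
The plan is to chain together the ingredients built up over Section~\ref{sec:thetaseries}. The starting point is the fact that $V^{f\natural} = V_{D_{12}^+}$ together with equations (\ref{eqn:latticepartition}) and (\ref{eqn:bos}), which give
\[ Z^{\NS \text{ or } \rR}_{\rR g}(V^{f\natural}) = \frac{1}{\eta^{12}} \, \Theta^\pm_{D_{12}^+ + \vec\lambda'}(q), \]
where the sign is $+$ in the NS case and $-$ in the R case, and $\vec\lambda' \in \bR^{12}/D_{12}^+$ is the log-eigenvalue of $\rR g$. By the discussion immediately before (\ref{eqn:shift}), $\rR$ is represented by $\vec\rho = (0,\dots,0,1)$, so that $\vec\lambda' = \vec\lambda + \vec\rho$.

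The second step is to expand the $D_{12}^+$ theta function down to products of single-variable theta series. Applying (\ref{eqn:thetaD12+}) splits $\Theta^\pm_{D_{12}^+ + \vec\lambda'}$ as $\Theta_{D_{12} + \vec\lambda'} \pm \Theta_{D_{12} + \vec\varsigma + \vec\lambda'}$, and then applying (\ref{eqn:thetaD12}) to each $D_{12}$-theta rewrites it as $\tfrac{1}{2}\bigl(\prod_i \theta^+_{\mu_i} + \prod_i \theta^-_{\mu_i}\bigr)$ with $\vec\mu$ equal to $\vec\lambda'$ or $\vec\lambda' + \vec\varsigma$ respectively. The result is a four-term expression in products $\prod_{i=1}^{12} \theta^\pm_{\lambda'_i}$ and $\prod_{i=1}^{12} \theta^\pm_{\lambda'_i + \frac12}$, with signs $(+,+,\pm,\pm)$ in the NS/R cases.

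The third step is to convert from $\vec\lambda'$ to $\vec\lambda$ using (\ref{eqn:shift}). Because $\vec\rho$ has a $1$ in exactly the twelfth coordinate and $0$ elsewhere, one has $\theta^\pm_{\lambda'_i} = \theta^\pm_{\lambda_i}$ for $i < 12$ and $\theta^\pm_{\lambda'_{12}} = \pm \theta^\pm_{\lambda_{12}}$. Hence $\prod_i \theta^+_{\lambda'_i} = \prod_i \theta^+_{\lambda_i}$ but $\prod_i \theta^-_{\lambda'_i} = -\prod_i \theta^-_{\lambda_i}$, and likewise for the $+\frac12$-shifted products. Substituting these into the four-term expansion and collecting produces the stated identities (up to the factor of $\eta^{-12}$ that would multiply both right-hand sides).

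The main obstacle is purely bookkeeping: three independent sources of sign --- the $\pm$ distinguishing NS from R in (\ref{eqn:thetaD12+}), the $\pm$ appearing in (\ref{eqn:shift}) on the twelfth coordinate, and the overall $\pm 1$ ambiguity in the Ramond trace noted just after (\ref{eqn:latticepartition}) --- must be tracked so that the four surviving summands assemble as written, with the characteristic pattern $(+,-,+,-)$ in the NS case and $(+,-,-,+)$ in the R case. As a sanity check I would specialize to $g=1$, where $\vec\lambda = 0$, verify that $Z^\rR_{\rR}(V^{f\natural})$ reduces to the integer $\tr_{\bR^{24}}(1) = 24$ consistent with Lemma~\ref{framshapeZrr} at leading order, and confirm that the four theta products reassemble into the expected classical Jacobi expressions.
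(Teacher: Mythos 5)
Your proposal is correct and follows exactly the route the paper intends: the paper offers no separate argument beyond ``combining equations (\ref{eqn:latticepartition}--\ref{eqn:shift}),'' and your chaining of the lattice factorization, the $D_{12}$ and $D_{12}^+$ theta decompositions, and the shift by $\vec\rho=(0,\dots,0,1)$ reproduces the stated sign patterns $(+,-,+,-)$ and $(+,-,-,+)$. Your parenthetical about the overall $\eta^{-12}$ prefactor is also well taken --- it is needed to match the tabulated $q$-expansions (e.g.\ $24+4096q+\cdots$ for the identity class) and appears to have been dropped from the statement itself.
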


\section{Results} \label{sec:results}

The following table reports the results of our computations, which were done using  Maple. We ran through all conjugacy classes $g\in \Co_0$. (The answer only depends on the corresponding class in $\Co_1$, but we ran through all classes in $\Co_0$ in order to double-check our results.) For each class, we used the Frame shape computed by \cite{kondo} to produce log-eigenvalues to parameterize the class of $g$ in $\rO_{24}$, and hence the log eigenvalues of the four possible lifts of $g$ to $\Spin_{24}$. For each of these lifts, we computed $Z^\rR_{\rR g}(V^{f\natural})$ from Lemma~\ref{prop:finalformulae}, and found that only one produced $\tr_{\bR^{24}}(g)$. For that lift, we computed $Z^\NS_{\rR g}(V^{f\natural})$ from Lemma~\ref{prop:finalformulae}. In the table, we have reported the Frame shapes and the computed values of $Z^\rR_{\rR g}(V^{f\natural})$ and $Z^\NS_{\rR g}(V^{f\natural})$. We also record the names of $g \in \Co_0$ and of its image in $\Co_1$ following the ATLAS naming convention: a conjugacy class is named $nX$ where $n \in \bN$ records the order of $g$, and $X \in \{\text A,\text B,\text C,\dots\}$ sorts the conjugacy classes of a given order in terms of the size of their centralizers, starting from the largest. 

The penultimate column of the table records whether $g$ is conspiratorial (\checkmark), nonconspiratorial (\textbf{X}), or spontaneously breaks supersymmetry~(0) --- as explained in Section~\ref{sec:preliminaries}, these are read off by comparing $Z^{\rR}_{\rR g}$ to the constant term of $Z^\NS_{\rR g}$. 
As reported in the introduction, we find that the only conspiratorial $\Co_1$-classes are
$$ 3C, \quad 4B, \quad 5C, \quad 6F, \quad 8D, \quad 9B.$$

In the final column of the table, included purely for the reader's convenience, we record from \cite{Theo} whether the class is  (\checkmark) or is not~(~) anomalous.
Although it was not to study the relationship between anomalies and spontaneous symmetry breaking, we find that a class in $\Co_1$ is anomalous if and only if it spontaneously breaks supersymmetry. Actually, this is immediate from our observation in the introduction that the conspiratorial classes are all unbalanced. Indeed, if $g$ is an anomalous symmetry of any holomorphic SCFT $V$, then $Z^\rR_{\rR g}(V) \propto Z^{\rR g}_\rR(V)$ must be a constant but must transform with nontrivial eigenvalue under $T$-transformations, and so must vanish; specializing to $V^{f\natural}$, we see that a necessary condition for an anomaly is that $g$ must be traceless. On the other hand, we found by calculation that the only conspiratorial classes are unbalanced, and~\cite{Theo} found by calculation that all unbalanced classes are non-anomalous. So if $g$ is balanced and traceless, then $Z^{\NS}_{\rR g} = |Z^{\rR}_{\rR g}| + O(q) = 0 + O(q)$, and so $V_{\rR g}$ has spontaneous supersymmetry breaking.

\begin{landscape}
\tiny

\begin{longtable}{|c|c|c|c|l|c|c|}
\hline
&&&&&& \\[-6pt]
$\Co_0$ & $\Co_1$ & Frame shape & $Z^\rR_{\rR g}$ & $Z^\NS_{\rR g}$ & Consp? &Anom? \\[2pt]
\hline
\endhead

\hline\endfoot

1A & 1A &$ 1^{24} $&  24 & $24+4096 q +98304 q^{2}+1228800 q^{3}+10747904 q^{4}+74244096 q^{5}+432144096 q^{6}+ \dots$ & \checkmark&
\\
 2A & 1A & $2^{24}1^{-24} $& -24 & $24+4096 q +98304 q^{2}+1228800 q^{3}+10747904 q^{4}+74244096 q^{5}+432144096 q^{6}+ \dots$ & \checkmark&
\\
 2B & 2A &$ 1^{8}2^{8} $& 8 & $8+256q^{1/2}+2048q+11264q^{3/2}+49152q^2+183808q^{5/2}+614400q^3+1882112q^{7/2}+ \dots$ & \checkmark&
\\
 2C & 2A& $2^{16}1^{-8}$ & -8 & $8+256q^{1/2}+2048q+11264q^{3/2}+49152q^2+183808q^{5/2}+614400q^3+1882112q^{7/2}+ \dots$ & \checkmark&
\\
 4A & 2B &$4^{12}2^{-12}$ & 0 & $0+24q^{1/8}+464q^{5/8}+3192q^{9/8}+16656q^{13/8}+69040q^{17/8}+251136q^{21/8}+ \dots$&0&\checkmark
\\
 2D & 2C &$2^{12} $& 0 & $0+64q^{1/4}+768q^{3/4}+4992q^{5/4}+24064q^{7/4}+96576q^{9/4}+340224q^{11/4}+ \dots$&0&\checkmark
\\
 3A & 3A &$3^{12}1^{-12} $& -12 & $12+24q^{1/3}+440q^{2/3}+1608q+3192q^{4/3}+13904q^{5/3}+35688q^2+68160q^{7/3}+ \dots$& \checkmark&
\\
 6A & 3A &$1^{12}6^{12}2^{-12}3^{-12}$ & 12 & $12+24q^{1/3}+440q^{2/3}+1608q+3192q^{4/3}+13904q^{5/3}+35688q^2+68160q^{7/3}+ \dots$& \checkmark&
\\
 3B & 3B &$1^{6}3^{6}$ & 6 & $6+64q^{1/3}+384q^{2/3}+1344q+4352q^{4/3}+12672q^{5/3}+32640q^2+79872q^{7/3}+\dots$& \checkmark&
\\
 6B & 3B &$2^{6}6^{6}1^{-6}3^{-6}$ & -6 & $6+64q^{1/3}+384q^{2/3}+1344q+4352q^{4/3}+12672q^{5/3}+32640q^2+79872q^{7/3}+ \dots$& \checkmark&
\\
 {3C} & {3C}& $3^{9}1^{-3}$  & {-3} & ${5}+72q^{1/3}+360q^{2/3}+1368q+4392q^{4/3}+12528q^{5/3}+32760q^2+80064q^{7/3}+ \dots$ &\textbf{X}&
\\
  {6C} &  {3C} & $1^{3}6^{9}2^{-3}3^{-9}$ & {3} &  ${5}+72q^{1/3}+360q^{2/3}+1368q+4392q^{4/3}+12528q^{5/3}+32760q^2+80064q^{7/3}+ \dots$&\textbf{X}&
\\
 3D & 3D &$3^{8} $& 0 & $0+16q^{1/9}+128q^{4/9}+576q^{7/9}+2048q^{10/9}+6304q^{13/9}+17408q^{16/9}+44416q^{19/9}+ \dots$&0&\checkmark
\\
 6D & 3D &$6^{8}3^{-8}$ & 0 & $0+16q^{1/9}+128q^{4/9}+576q^{7/9}+2048q^{10/9}+6304q^{13/9}+17408q^{16/9}+44416q^{19/9}+ \dots$&0&\checkmark
\\
 4B & 4A &$1^{8}4^{8}2^{-8}$ & 8 & $8+16q^{1/4}+128q^{1/2}+448q^{3/4}+1024q+2272q^{5/4}+5632q^{3/2}+12672q^{7/4}+24576q^2+\dots$& \checkmark&
\\
 4C & 4A &$ 4^{8}1^{-8} $& -8 & $8+16q^{1/4}+128q^{1/2}+448q^{3/4}+1024q+2272q^{5/4}+5632q^{3/2}+12672q^{7/4}+24576q^2+ \dots$& \checkmark&
\\
  {4D} &  {4B} & $ 4^{8}2^{-4}$ & {0} &  ${4}+32q^{1/4}+128q^{1/2}+384q^{3/4}+1024q+2496q^{5/4}+5632q^{3/2}+12032q^{7/4}+24576q^2+ \dots$&\textbf{X}&
\\
 4E & 4C & $1^{4}2^{2}4^{4}$ & 4 & $4+32q^{1/4}+128q^{1/2}+384q^{3/4}+1024q+2496q^{5/4}+5632q^{3/2}+12032q^{7/4}+24576q^2+ \dots$& \checkmark&
\\
 4F & 4C & $2^{6}4^{4}1^{-4} $& -4 & $4+32q^{1/4}+128q^{1/2}+384q^{3/4}+1024q+2496q^{5/4}+5632q^{3/2}+12032q^{7/4}+24576q^2+ \dots$& \checkmark&
\\

 4G & 4D & $2^{4}4^{4}$ & 0 & $0+16q^{1/8}+64q^{3/8}+224q^{5/8}+640q^{7/8}+1616q^{9/8}+3776q^{11/8}+8288q^{13/8}+ \dots$&0&\checkmark
\\
 8A & 4E  & $8^{6}4^{-6}$ & 0 & $0+12q^{3/32}+52q^{11/32}+204q^{19/32}+552q^{27/32}+1456q^{35/32}+3396q^{43/32}+7560q^{51/32}+ \dots$&0&\checkmark
\\

 4H & 4F &$ 4^{6}$ & 0 & $0+8q^{1/16}+48q^{5/16}+168q^{9/16}+496q^{13/16}+1296q^{17/16}+3072q^{21/16}+6840q^{25/16}+\dots$&0&\checkmark
\\
 5A & 5A & $5^{6}1^{-6}$ & -6 & $6+12q^{1/5}+52q^{2/5}+192q^{3/5}+372q^{4/5}+844q+1584q^{6/5}+3216q^{7/5}+6388q^{8/5}+ \dots$& \checkmark&
\\
 10A & 5A & $1^{6}10^{6}2^{-6}5^{-6}$ & 6 & $6+12q^{1/5}+52q^{2/5}+192q^{3/5}+372q^{4/5}+844q+1584q^{6/5}+3216q^{7/5}+6388q^{8/5}+\dots$& \checkmark&
\\
 5B & 5B & $1^{4}5^{4}$ & 4 & $4+16q^{1/5}+64q^{2/5}+160q^{3/5}+384q^{4/5}+816q+1664q^{6/5}+3296q^{7/5}+6144q^{8/5}+1 \dots$& \checkmark&
\\
 10B & 5B & $2^{4}10^{4}1^{-4}5^{-4}$ & -4 & $4+16q^{1/5}+64q^{2/5}+160q^{3/5}+384q^{4/5}+816q+1664q^{6/5}+3296q^{7/5}+6144q^{8/5}+ \dots$& \checkmark&
\\
  {5C} &  {5C} &  $5^{5}1^{-1}$ & {-1} &  ${3}+20q^{1/5}+60q^{2/5}+160q^{3/5}+380q^{4/5}+820q+1680q^{6/5}+3280q^{7/5}+6140q^{8/5}+ \dots$&\textbf{X}&
\\
  {10C} &  {5C} & $1^{1}10^{5}2^{-1}5^{-5} $&  {1} &  ${3}+20q^{1/5}+60q^{2/5}+160q^{3/5}+380q^{4/5}+820q+1680q^{6/5}+3280q^{7/5}+6140q^{8/5}+ \dots$&\textbf{X}&
\\
 6E & 6A & $3^{4}6^{4}1^{-4}2^{-4} $& -4 & $4+16q^{1/6}+8q^{1/3}+112q^{1/2}+232q^{2/3}+224q^{5/6}+792q+1408q^{7/6}+1576q^{4/3}+ \dots$& \checkmark&
\\

 6F & 6A &$ 1^{4}6^{8}2^{-8}3^{-4}$ & 4 & $4+16q^{1/6}+8q^{1/3}+112q^{1/2}+232q^{2/3}+224q^{5/6}+792q+1408q^{7/6}+1576q^{4/3}+ \dots$& \checkmark&
\\
 12A & 6B &$ 2^{6}12^{6}4^{-6}6^{-6}$ & 0 & $0+12q^{1/8}+40q^{7/24}+24q^{11/24}+192q^{5/8}+360q^{19/24}+384q^{23/24}+1236q^{9/8}+ \dots$&0&\checkmark
\\
 6G & 6C &$ 2^{5}3^{4}6^{1}1^{-4}$ & -4 & $4+8q^{1/6}+40q^{1/3}+88q^{1/2}+168q^{2/3}+368q^{5/6}+696q+1216q^{7/6}+2216q^{4/3}+ \dots$& \checkmark&
\\
 6H & 6C & $1^{4}2^{1}6^{5}3^{-4}$ & 4 & $4+8q^{1/6}+40q^{1/3}+88q^{1/2}+168q^{2/3}+368q^{5/6}+696q+1216q^{7/6}+2216q^{4/3}+\dots$& \checkmark&
\\
 6I & 6D & $1^{5}3^{1}6^{4}2^{-4}$ & 5 & $5+8q^{1/6}+32q^{1/3}+88q^{1/2}+192q^{2/3}+368q^{5/6}+672q+1216q^{7/6}+2176q^{4/3}+ \dots$& \checkmark&
\\
6J & 6D & $2^{1}6^{5}1^{-5}3^{-1}$ & -5 & $5+8q^{1/6}+32q^{1/3}+88q^{1/2}+192q^{2/3}+368q^{5/6}+672q+1216q^{7/6}+2176q^{4/3}+ \dots$& \checkmark&
\\
 6K & 6E & $1^{2}2^{2}3^{2}6^{2}$ & 2 & $2+16q^{1/6}+32q^{1/3}+80q^{1/2}+192q^{2/3}+352q^{5/6}+672q+1280q^{7/6}+2176q^{4/3}+ \dots$& \checkmark&
\\
 6L & 6E & $2^{4}6^{4}1^{-2}3^{-2}$ & -2 & $2+16q^{1/6}+32q^{1/3}+80q^{1/2}+192q^{2/3}+352q^{5/6}+672q+1280q^{7/6}+2176q^{4/3}+\dots$& \checkmark&
\\
  {6M} &  {6F} &  $3^{3}6^{3}1^{-1}2^{-1}$ & {-1} & ${3}+12q^{1/6}+36q^{1/3}+84q^{1/2}+180q^{2/3}+360q^{5/6}+684q+1248q^{7/6}+2196q^{4/3}+ \dots$&\textbf{X}&
\\
  {6N} &  {6F} &  $1^{1}6^{6}2^{-2}3^{-3}$ & {1} &  ${3}+12q^{1/6}+36q^{1/3}+84q^{1/2}+180q^{2/3}+360q^{5/6}+684q+1248q^{7/6}+2196q^{4/3}+  \dots$&\textbf{X}&
\\
 6O & 6G & $2^{3}6^{3} $& 0 & $0+8q^{1/12}+24q^{1/4}+48q^{5/12}+128q^{7/12}+264q^{3/4}+480q^{11/12}+944q^{13/12}+1680q^{5/4}+ \dots$&0&\checkmark
\\

 12B & 6H & $12^{4}6^{-4} $& 0 & $0+8q^{5/72}+16q^{17/72}+56q^{29/72}+112q^{41/72}+248q^{53/72}+464q^{65/72}+896q^{77/72}+ \dots$&0&\checkmark
\\
 6P & 6I &$ 6^{4}$ & 0 & $0+4q^{1/36}+16q^{7/36}+40q^{13/36}+96q^{19/36}+204q^{25/36}+400q^{31/36}+760q^{37/36}+ \dots$&0&\checkmark
\\
 7A & 7A & $7^{4}1^{-4}$ & -4 & $4+8q^{1/7}+16q^{2/7}+56q^{3/7}+104q^{4/7}+200q^{5/7}+328q^{6/7}+592q+976q^{8/7}+1560q^{9/7}+ \dots$& \checkmark&
\\
 14A & 7A & $1^{4}14^{4}2^{-4}7^{-4}$ & 4 & $4+8q^{1/7}+16q^{2/7}+56q^{3/7}+104q^{4/7}+200q^{5/7}+328q^{6/7}+592q+976q^{8/7}+1560q^{9/7}+ \dots$& \checkmark&
\\
 7B & 7B & $1^{3}7^{3}$ & 3 & $3+8q^{1/7}+24q^{2/7}+48q^{3/7}+104q^{4/7}+192q^{5/7}+336q^{6/7}+584q+984q^{8/7}+1608q^{9/7}+\dots$& \checkmark&
\\
 14B & 7B & $ 2^{3}14^{3}1^{-3}7^{-3}$ & -3 & $3+8q^{1/7}+24q^{2/7}+48q^{3/7}+104q^{4/7}+192q^{5/7}+336q^{6/7}+584q+984q^{8/7}+1608q^{9/7}+\dots$& \checkmark&
\\
 8B & 8A & $8^{4}2^{-4}$ &0 & $0+8q^{1/16}+8q^{3/16}+16q^{5/16}+48q^{7/16}+104q^{9/16}+152q^{11/16}+208q^{13/16}+400q^{15/16}+ \dots$&0&\checkmark
\\
 8C & 8B &  $2^{4}8^{4}4^{-4}$ &0 & $0+4q^{1/16}+16q^{3/16}+24q^{5/16}+32q^{7/16}+84q^{9/16}+176q^{11/16}+248q^{13/16}+352q^{15/16}+ \dots$&0&\checkmark
\\
 8D & 8C & $1^{4}8^{4}2^{-2}4^{-2}$ &  4 & $4+4q^{1/8}+16q^{1/4}+32q^{3/8}+64q^{1/2}+120q^{5/8}+192q^{3/4}+320q^{7/8}+512q+788q^{9/8}+ \dots$& \checkmark&
\\
 8E & 8C &$ 2^{2}8^{4}1^{-4}4^{-2}$ & -4 & $4+4q^{1/8}+16q^{1/4}+32q^{3/8}+64q^{1/2}+120q^{5/8}+192q^{3/4}+320q^{7/8}+512q+788q^{9/8}+ \dots$& \checkmark&
\\
  {8F}&  {8D} &$ 8^{4}4^{-2}$ &   {0} &  ${2}+8q^{1/8}+16q^{1/4}+32q^{3/8}+64q^{1/2}+112q^{5/8}+192q^{3/4}+320q^{7/8}+512q+808q^{9/8}+ \dots$&\textbf{X}&
\\
 8G & 8E & $1^{2}2^{1}4^{1}8^{2} $&  2 & $2+8q^{1/8}+16q^{1/4}+32q^{3/8}+64q^{1/2}+112q^{5/8}+192q^{3/4}+320q^{7/8}+512q+808q^{9/8}+ \dots$& \checkmark&
\\
 8H & 8E & $2^{3}4^{1}8^{2}1^{-2}$ &  -2 & $2+8q^{1/8}+16q^{1/4}+32q^{3/8}+64q^{1/2}+112q^{5/8}+192q^{3/4}+320q^{7/8}+512q+808q^{9/8}+ \dots$& \checkmark&
\\
 8I & 8F & $4^{2}8^{2}$ &  0 & $0+4q^{1/32}+8q^{5/32}+20q^{9/32}+40q^{13/32}+72q^{17/32}+128q^{21/32}+220q^{25/32}+360q^{29/32}+ \dots$&0&\checkmark
\\
 9A & 9A &$ 9^{3}1^{-3}$ &  -3 & $3+6q^{1/9}+8q^{2/9}+24q^{1/3}+42q^{4/9}+80q^{5/9}+120q^{2/3}+192q^{7/9}+296q^{8/9}+456q+ \dots$& \checkmark&
\\
 18A & 9A & $1^{3}18^{3}2^{-3}9^{-3}$ &  3 & $3+6q^{1/9}+8q^{2/9}+24q^{1/3}+42q^{4/9}+80q^{5/9}+120q^{2/3}+192q^{7/9}+296q^{8/9}+456q+ \dots$ & \checkmark&
\\
  {9B} &  {9B} &  $9^{3}3^{-1} $&  {0} &  ${2}+6q^{1/9}+12q^{2/9}+24q^{1/3}+42q^{4/9}+72q^{5/9}+120q^{2/3}+192q^{7/9}+300q^{8/9}+456q+ \dots$&\textbf{X}&
\\
  {18B} &  {9B} &  $3^{1}18^{3}6^{-1}9^{-3}$ &  {0} &  ${2}+6q^{1/9}+12q^{2/9}+24q^{1/3}+42q^{4/9}+72q^{5/9}+120q^{2/3}+192q^{7/9}+300q^{8/9}+456q+ \dots$ &\textbf{X}&
\\
 9C & 9C &$ 1^{3}9^{3}3^{-2} $&  3 & $3+4q^{1/9}+12q^{2/9}+24q^{1/3}+44q^{4/9}+72q^{5/9}+120q^{2/3}+192q^{7/9}+300q^{8/9}+456q+ \dots$ & \checkmark&
\\

18C &9C & $2^{3}3^{2}18^{3}1^{-3}6^{-2}9^{-3}$ &  -3 & $3+4q^{1/9}+12q^{2/9}+24q^{1/3}+44q^{4/9}+72q^{5/9}+120q^{2/3}+192q^{7/9}+300q^{8/9}+456q+ \dots$& \checkmark&
\\

10D &10A &$5^{2}10^{2}1^{-2}2^{-2}$  & -2 & $2+8q^{1/10}+4q^{1/5}+16q^{3/10}+28q^{2/5}+56q^{1/2}+96q^{3/5}+112q^{7/10}+188q^{4/5}+ \dots$ & \checkmark&
\\
 10E &10A &$1^{2}10^{4}2^{-4}5^{-2}$ &  2 & $2+8q^{1/10}+4q^{1/5}+16q^{3/10}+28q^{2/5}+56q^{1/2}+96q^{3/5}+112q^{7/10}+188q^{4/5}+\dots$ & \checkmark&
\\
20A &10B &$2^{3}20^{3}4^{-3}10^{-3}$ &  0 &$0+6q^{3/40}+8q^{7/40}+18q^{11/40}+24q^{3/8}+32q^{19/40}+84q^{23/40}+114q^{27/40}+180q^{31/40}+ \dots$&0&\checkmark
\\
20B &10C &$4^{2}20^{2}2^{-2}10^{-2}$ &  0 &$0+4q^{1/40}+4q^{1/8}+12q^{9/40}+16q^{13/40}+40q^{17/40}+56q^{21/40}+92q^{5/8}+136q^{29/40}+\dots$&0&\checkmark
\\
10F &10D &$2^{3}5^{2}10^{1}1^{-2}$ &  -2 &$2+4q^{1/10}+12q^{1/5}+16q^{3/10}+28q^{2/5}+52q^{1/2}+80q^{3/5}+128q^{7/10}+188q^{4/5}+ \dots$ & \checkmark&
\\
10G &10D &$1^{2}2^{1}10^{3}5^{-2}$ &  2 &$2+4q^{1/10}+12q^{1/5}+16q^{3/10}+28q^{2/5}+52q^{1/2}+80q^{3/5}+128q^{7/10}+188q^{4/5}+\dots$& \checkmark&
\\
10H & 10E &$1^{3}5^{1}10^{2}2^{-2}$ &  3 &$3+4q^{1/10}+8q^{1/5}+16q^{3/10}+32q^{2/5}+52q^{1/2}+80q^{3/5}+128q^{7/10}+192q^{4/5}+ \dots$ & \checkmark&
\\
10I & 10E &$2^{1}10^{3}1^{-3}5^{-1}$ &  -3 &$3+4q^{1/10}+8q^{1/5}+16q^{3/10}+32q^{2/5}+52q^{1/2}+80q^{3/5}+128q^{7/10}+192q^{4/5}+ \dots$ & \checkmark&
\\
10J &10F &$2^{2}10^{2}$ &  0 &$0+4q^{1/20}+8q^{3/20}+12q^{1/4}+24q^{7/20}+36q^{9/20}+64q^{11/20}+104q^{13/20}+152q^{3/4}+ \dots$&0&\checkmark
\\
11A &11A & $1^{2}11^{2}$ &  2 &$2+4q^{1/11}+8q^{2/11}+12q^{3/11}+24q^{4/11}+36q^{5/11}+56q^{6/11}+88q^{7/11}+128q^{8/11}+ \dots$& \checkmark&
\\
22A &11A & $2^{2}22^{2}1^{-2}11^{-2}$ &  0 &$0+2q^{1/44}+2q^{3/44}+2q^{5/44}+4q^{7/44}+4q^{9/44}+6q^{1/4}+8q^{13/44}+10q^{15/44}+12q^{17/44}+ \dots$&0&\checkmark
\\
12C &12A &$2^{4}3^{4}12^{4}1^{-4}4^{-4}6^{-4}$ &  -4 &$4+8q^{1/6}+8q^{1/4}+8q^{1/3}+48q^{5/12}+56q^{1/2}+16q^{7/12}+104q^{2/3}+176q^{3/4}+112q^{5/6}+ \dots$& \checkmark&
\\
12D &12A & $1^{4}12^{4}3^{-4}4^{-4}$ &  4 &$4+8q^{1/6}+8q^{1/4}+8q^{1/3}+48q^{5/12}+56q^{1/2}+16q^{7/12}+104q^{2/3}+176q^{3/4}+112q^{5/6}+ \dots$& \checkmark&
\\
 12E &12B & $2^{2}12^{4}4^{-4}6^{-2}$ &  0 &$0+8q^{1/12}+4q^{1/6}+8q^{1/4}+28q^{1/3}+16q^{5/12}+28q^{1/2}+96q^{7/12}+76q^{2/3}+\dots$&0&\checkmark
\\
12F &12C & $6^{2}12^{2}2^{-2}4^{2}$ &  0 &$0+4q^{1/24}+4q^{1/8}+8q^{5/24}+16q^{7/24}+20q^{3/8}+32q^{11/24}+56q^{13/24}+72q^{5/8}+ \dots$&0&\checkmark
\\
12G &12D &$2^{1}3^{3}12^{3}1^{-1}4^{-1}6^{-3}$ &  -1 &$1+6q^{1/12}+6q^{1/6}+6q^{1/4}+18q^{1/3}+36q^{5/12}+42q^{1/2}+48q^{7/12}+90q^{2/3}+150q^{3/4}+\dots$& \checkmark&
\\
12H &12D &$ 1^{1}12^{3}3^{-3}4^{-1} $&  1 &$1+6q^{1/12}+6q^{1/6}+6q^{1/4}+18q^{1/3}+36q^{5/12}+42q^{1/2}+48q^{7/12}+90q^{2/3}+150q^{3/4}+\dots$& \checkmark&
\\
12I & 12E & $1^{2}3^{2}4^{2}12^{2}2^{-2}6^{-2}$ &  2 &$2+4q^{1/12}+8q^{1/6}+4q^{1/4}+16q^{1/3}+40q^{5/12}+40q^{1/2}+48q^{7/12}+96q^{2/3}+148q^{3/4}+\dots$& \checkmark&
\\
12J & 12E & $4^{2}12^{2}1^{-2}3^{-2}$ & -2 &$2+4q^{1/12}+8q^{1/6}+4q^{1/4}+16q^{1/3}+40q^{5/12}+40q^{1/2}+48q^{7/12}+96q^{2/3}+148q^{3/4}+\dots$& \checkmark&
\\
24A &12F & $4^{3}24^{3}8^{-3}12^{-3}$ & 0 &$0+6q^{5/96}+2q^{13/96}+6q^{7/32}+24q^{29/96}+18q^{37/96}+24q^{15/32}+84q^{53/96}+56q^{61/96}+ \dots$&0&\checkmark
\\
12K &12G & $4^{2}12^{2}2^{-1}6^{-1}$ &  0 &$0+2q^{1/144}+4q^{13/144}+6q^{25/144}+12q^{37/144}+18q^{49/144}+28q^{61/144}+44q^{73/144}+\dots$&0&\checkmark
\\

12I &12H &$ 1^{1}2^{2}3^{1}12^{2}4^{-2}$ &  1 &$1+4q^{1/12}+8q^{1/6}+12q^{1/4}+16q^{1/3}+24q^{5/12}+40q^{1/2}+64q^{7/12}+96q^{2/3}+132q^{3/4}+\dots$& \checkmark&
\\
12M &12H &$2^{3}6^{1}12^{2}1^{-1}3^{-1}4^{-2}$ &  -1 &$1+4q^{1/12}+8q^{1/6}+12q^{1/4}+16q^{1/3}+24q^{5/12}+40q^{1/2}+64q^{7/12}+96q^{2/3}+132q^{3/4}+ \dots$& \checkmark&
\\
12N &12I & $2^{2}3^{2}4^{1}12^{1}1^{-2}$ &  -2 &$2+4q^{1/12}+4q^{1/6}+12q^{1/4}+20q^{1/3}+24q^{5/12}+44q^{1/2}+64q^{7/12}+84q^{2/3}+132q^{3/4}+\dots$& \checkmark&
\\
12O &12I & $1^{2}4^{1}6^{2}12^{1}3^{-2}$ &  2 &$2+4q^{1/12}+4q^{1/6}+12q^{1/4}+20q^{1/3}+24q^{5/12}+44q^{1/2}+64q^{7/12}+84q^{2/3}+132q^{3/4}+ \dots$& \checkmark&
\\
12P &12J & $2^{1}4^{1}6^{1}12^{1}$ &  0 &$0+4q^{1/24}+4q^{1/8}+8q^{5/24}+16q^{7/24}+20q^{3/8}+32q^{11/24}+56q^{13/24}+72q^{5/8}+ \dots$&0&\checkmark
\\
12Q &12K &$1^{3}12^{3}2^{-1}3^{-1}4^{-1}6^{-1}$ &  3 &$3+2q^{1/12}+6q^{1/6}+10q^{1/4}+18q^{1/3}+28q^{5/12}+42q^{1/2}+64q^{7/12}+90q^{2/3}+130q^{3/4}+ \dots$& \checkmark&
\\
12R &12K &$2^{2}3^{1}12^{3}1^{-3}4^{-1}6^{-2}$ &  -3 &$3+2q^{1/12}+6q^{1/6}+10q^{1/4}+18q^{1/3}+28q^{5/12}+42q^{1/2}+64q^{7/12}+90q^{2/3}+130q^{3/4}+ \dots$& \checkmark&
\\
24B &12L & $24^{2}12^{-2}$ &  0 &$0+4q^{11/288}+4q^{35/288}+8q^{59/288}+12q^{83/288}+24q^{107/288}+32q^{131/288}+52q^{155/288}+\dots$&0&\checkmark
\\

12S &12M & $12^{2}$ &  0 &$0+2q^{1/144}+4q^{13/144}+6q^{25/144}+12q^{37/144}+18q^{49/144}+28q^{61/144}+44q^{73/144}+\dots$&0&\checkmark
\\
13A &13A & $13^{2}1^{-2}$ &  -2 &$2+4q^{1/13}+4q^{2/13}+8q^{3/13}+12q^{4/13}+24q^{5/13}+32q^{6/13}+48q^{7/13}+68q^{8/13}+ \dots$& \checkmark&
\\
26A &13A & $1^{2}26^{2}2^{-2}13^{-2}$ &  2 &$2+4q^{1/13}+4q^{2/13}+8q^{3/13}+12q^{4/13}+24q^{5/13}+32q^{6/13}+48q^{7/13}+68q^{8/13}+ \dots$ & \checkmark&
\\
28A &14A & $2^{2}28^{2}4^{-2}14^{-2}$ &  0 &$0+4q^{3/56}+4q^{1/8}+8q^{11/56}+8q^{15/56}+20q^{19/56}+16q^{23/56}+32q^{27/56}+48q^{31/56}+ \dots$&0&\checkmark
\\
14C &14B & $1^{1}2^{1}7^{1}14^{1}$ &  1 &$1+4q^{1/14}+4q^{1/7}+8q^{3/14}+12q^{2/7}+16q^{5/14}+24q^{3/7}+36q^{1/2}+52q^{4/7}+68q^{9/14}+ \dots$& \checkmark&
\\
14D &14B & $2^{2}14^{2}1^{-1}7^{-1}$ &  -1 &$1+4q^{1/14}+4q^{1/7}+8q^{3/14}+12q^{2/7}+16q^{5/14}+24q^{3/7}+36q^{1/2}+52q^{4/7}+68q^{9/14}+ \dots$& \checkmark&
\\
15A &15A &$1^{3}15^{3}3^{-3}5^{-3}$ &  3 &$3+2q^{1/15}+6q^{1/5}+18q^{4/15}+6q^{1/3}+24q^{2/5}+36q^{7/15}+12q^{8/15}+78q^{3/5}+92q^{2/3}+\dots$ & \checkmark&
\\
30A &15A & $2^{3}3^{3}5^{3}30^{3}1^{-3}6^{-3}10^{-3}15^{-3}$ & -3 &$3+2q^{1/15}+6q^{1/5}+18q^{4/15}+6q^{1/3}+24q^{2/5}+36q^{7/15}+12q^{8/15}+78q^{3/5}+92q^{2/3}+ \dots$& \checkmark&
\\
15B &15B & $3^{2}15^{2}1^{-2}5^{-2}$ &  -2 &$2+4q^{1/15}+8q^{1/5}+12q^{4/15}+4q^{1/3}+28q^{2/5}+40q^{7/15}+16q^{8/15}+68q^{3/5}+88q^{2/3}+\dots$ & \checkmark&
\\
30B &15B &$1^{2}5^{2}6^{2}30^{2}2^{-2}3^{-2}10^{-2}15^{-2}$ &  2 &$2+4q^{1/15}+8q^{1/5}+12q^{4/15}+4q^{1/3}+28q^{2/5}+40q^{7/15}+16q^{8/15}+68q^{3/5}+88q^{2/3}+\dots$& \checkmark&
\\
15C &15C &$15^{2}3^{-2}$ &  0 &$0+4q^{2/45}+4q^{1/9}+4q^{8/45}+8q^{11/45}+8q^{14/45}+24q^{17/45}+28q^{4/9}+32q^{23/45}+ \dots$&0&\checkmark
\\
30C &15C & $3^{2}30^{2}6^{-2}15^{-2}$ & 0 &$0+4q^{2/45}+4q^{1/9}+4q^{8/45}+8q^{11/45}+8q^{14/45}+24q^{17/45}+28q^{4/9}+32q^{23/45}+ \dots$&0&\checkmark
\\
15D &15D & $1^{1}3^{1}5^{1}15^{1}$ &  1 &$1+4q^{1/15}+4q^{2/15}+4q^{1/5}+12q^{4/15}+12q^{1/3}+20q^{2/5}+32q^{7/15}+36q^{8/15}+52q^{3/5}+ \dots$& \checkmark&
\\

30D &15D &$2^{1}6^{1}10^{1}30^{1}1^{-1}3^{-1}5^{-1}15^{-1}$ &  -1 &$1+4q^{1/15}+4q^{2/15}+4q^{1/5}+12q^{4/15}+12q^{1/3}+20q^{2/5}+32q^{7/15}+36q^{8/15}+52q^{3/5}+ \dots$& \checkmark&
\\
 15E & 15E & $1^{2}15^{2}3^{-1}5^{-1}$ &  2 &$2+2q^{1/15}+4q^{2/15}+6q^{1/5}+10q^{4/15}+14q^{1/3}+20q^{2/5}+28q^{7/15}+40q^{8/15}+54q^{3/5}+ \dots$& \checkmark&
\\
 30E & 15E &$2^{2}3^{1}5^{1}30^{2}1^{-2}6^{-1}10^{-1}15^{-2}$ &  -2 &$2+2q^{1/15}+4q^{2/15}+6q^{1/5}+10q^{4/15}+14q^{1/3}+20q^{2/5}+28q^{7/15}+40q^{8/15}+54q^{3/5}+ \dots$& \checkmark&
\\
16A &16A & $2^{2}16^{2}4^{-1}8^{-1}$ &  0 &$0+2q^{1/32}+4q^{3/32}+4q^{5/32}+8q^{7/32}+10q^{9/32}+12q^{11/32}+20q^{13/32}+24q^{15/32}+ \dots$&0&\checkmark
\\
16B &16B & $1^{2}16^{2}2^{-1}8^{-1}$ &  2 &$2+2q^{1/16}+4q^{1/8}+4q^{3/16}+8q^{1/4}+12q^{5/16}+16q^{3/8}+24q^{7/16}+32q^{1/2}+42q^{9/16}+ \dots$ & \checkmark&
\\
16C &16B & $2^{1}16^{2}1^{-2}8^{-1}$ &  -2 &$2+2q^{1/16}+4q^{1/8}+4q^{3/16}+8q^{1/4}+12q^{5/16}+16q^{3/8}+24q^{7/16}+32q^{1/2}+42q^{9/16}+ \dots$& \checkmark&
\\
18D &18A & $9^{1}18^{1}1^{-1}2^{-1}$ &  -1 &$1+4q^{1/18}+2q^{1/9}+4q^{1/6}+4q^{2/9}+8q^{5/18}+12q^{1/3}+16q^{7/18}+22q^{4/9}+28q^{1/2}+ \dots$& \checkmark&
\\
 18E &18A & $1^{1}18^{2}2^{-2}9^{-1}$ &  1 &$1+4q^{1/18}+2q^{1/9}+4q^{1/6}+4q^{2/9}+8q^{5/18}+12q^{1/3}+16q^{7/18}+22q^{4/9}+28q^{1/2}+ \dots$& \checkmark&
\\
18F &18B & $1^{2}9^{1}18^{1}2^{-1}3^{-1}$ &  2 &$2+2q^{1/18}+2q^{1/9}+4q^{1/6}+6q^{2/9}+8q^{5/18}+12q^{1/3}+16q^{7/18}+22q^{4/9}+28q^{1/2}+ \dots$& \checkmark&\\
18G &18B &$2^{1}3^{1}18^{2}1^{-2}6^{-1}9^{-1}$ &  -2 &$2+2q^{1/18}+2q^{1/9}+4q^{1/6}+6q^{2/9}+8q^{5/18}+12q^{1/3}+16q^{7/18}+22q^{4/9}+28q^{1/2}+ \dots$& \checkmark&
\\
18H &18C & $2^{2}9^{1}18^{1}1^{-1}6^{-1}$ &  -1 &$1+2q^{1/18}+4q^{1/9}+4q^{1/6}+6q^{2/9}+8q^{5/18}+12q^{1/3}+16q^{7/18}+20q^{4/9}+28q^{1/2}+ \dots$& \checkmark&
\\
18I &18C & $1^{1}2^{1}18^{2}6^{-1}9^{-1}$ &  1 &$1+2q^{1/18}+4q^{1/9}+4q^{1/6}+6q^{2/9}+8q^{5/18}+12q^{1/3}+16q^{7/18}+20q^{4/9}+28q^{1/2}+ \dots$ & \checkmark&
\\
20C &20A &$2^{2}5^{2}20^{2}1^{-2}4^{-2}10^{-2}$ &  -2 &$2+4q^{1/10}+4q^{3/20}+4q^{1/5}+4q^{1/4}+8q^{3/10}+20q^{7/20}+12q^{2/5}+8q^{9/20}+28q^{1/2}+ \dots$& \checkmark&
\\
20D &20A & $1^{2}20^{2}4^{-2}5^{-2}$ & 2 &$2+4q^{1/10}+4q^{3/20}+4q^{1/5}+4q^{1/4}+8q^{3/10}+20q^{7/20}+12q^{2/5}+8q^{9/20}+28q^{1/2}+ \dots$ & \checkmark&
\\
 20E &20B &$4^{1}20^{1}$ &  0 &$0+2q^{1/80}+2q^{1/16}+2q^{9/80}+4q^{13/80}+4q^{17/80}+8q^{21/80}+10q^{5/16}+12q^{29/80}+ \dots$&0&\checkmark
\\
20F &20C &  $2^{2}5^{1}20^{1}1^{-1}4^{-1}$ & -1 &$1+2q^{1/20}+2q^{1/10}+4q^{3/20}+6q^{1/5}+6q^{1/4}+8q^{3/10}+12q^{7/20}+14q^{2/5}+18q^{9/20}+ \dots$& \checkmark&
\\
20G &20C &$1^{1}2^{1}10^{1}20^{1}4^{-1}5^{-1}$ &  1 &$1+2q^{1/20}+2q^{1/10}+4q^{3/20}+6q^{1/5}+6q^{1/4}+8q^{3/10}+12q^{7/20}+14q^{2/5}+18q^{9/20}+ \dots$ & \checkmark&
\\

 21A & 21A &$ 1^{2}21^{2}3^{-2}7^{-2}$ &  2 & $2+4q^{2/21}+4q^{1/7}+8q^{5/21}+8q^{2/7}+4q^{1/3}+20q^{8/21}+24q^{3/7}+8q^{10/21}+32q^{11/21}+ \dots$& \checkmark&
\\
 42A & 21A & $2^{2}3^{2}7^{2}42^{2}1^{-2}6^{-2}14^{-2}21^{-2}$ & -2 & $2+4q^{2/21}+4q^{1/7}+8q^{5/21}+8q^{2/7}+4q^{1/3}+20q^{8/21}+24q^{3/7}+8q^{10/21}+32q^{11/21}+ \dots$& \checkmark&
\\
 21B & 21B & $7^{1}21^{1}1^{-1}3^{-1}$ &  -1 & $1+2q^{1/21}+4q^{2/21}+2q^{1/7}+2q^{4/21}+8q^{5/21}+4q^{2/7}+10q^{1/3}+16q^{8/21}+18q^{3/7}+ \dots$& \checkmark&
\\
 42B & 21B & $1^{1}3^{1}14^{1}42^{1}2^{-1}6^{-1}7^{-1}21^{-1}$ & 1 & $1+2q^{1/21}+4q^{2/21}+2q^{1/7}+2q^{4/21}+8q^{5/21}+4q^{2/7}+10q^{1/3}+16q^{8/21}+18q^{3/7}+ \dots$ & \checkmark&
\\

 21C & 21C & $3^{1}21^{1}$ &  0 & $0+2q^{1/63}+2q^{4/63}+2q^{1/9}+4q^{10/63}+4q^{13/63}+6q^{16/63}+8q^{19/63}+12q^{22/63}++ \dots$&0&\checkmark
\\
 42C & 21C & $6^{1}42^{1}3^{-1}21^{-1}$ &  0 & $0+2q^{1/63}+2q^{4/63}+2q^{1/9}+4q^{10/63}+4q^{13/63}+6q^{16/63}+8q^{19/63}+12q^{22/63}++ \dots$&0&\checkmark
\\
 22BC & 22A & $2^{1}22^{1}$ &  0 & $0+2q^{1/44}+2q^{3/44}+2q^{5/44}+4q^{7/44}+4q^{9/44}+6q^{1/4}+8q^{13/44}+10q^{15/44}+12q^{17/44}+ \dots$&0&\checkmark
\\
 23AB & 23AB & $1^{1}23^{1}$ &  1 & $1+2q^{1/23}+2q^{2/23}+2q^{3/23}+4q^{4/23}+4q^{5/23}+6q^{6/23}+8q^{7/23}+10q^{8/23}+12q^{9/23}+ \dots$ & \checkmark&
\\
 46AB & 23AB & $2^{1}46^{1}1^{-1}23^{-1}$ &  -1 & $1+2q^{1/23}+2q^{2/23}+2q^{3/23}+4q^{4/23}+4q^{5/23}+6q^{6/23}+8q^{7/23}+10q^{8/23}+12q^{9/23}+ \dots$& \checkmark&
\\
 24C & 24A & $2^{2}24^{2}6^{-2}8^{-2}$ &  0 & $0+4q^{1/16}+4q^{5/48}+4q^{3/16}+8q^{11/48}+8q^{5/16}+16q^{17/48}+8q^{19/48}+20q^{7/16}+ \dots$&0&\checkmark
\\
 24D & 24B &$2^{1}3^{2}4^{1}24^{2}1^{-2}6^{-1}8^{-2}12^{-1}$ &  -2 & $2+2q^{1/24}+2q^{1/8}+4q^{1/6}+8q^{1/4}+12q^{7/24}+4q^{1/3}+14q^{3/8}+16q^{5/12}+4q^{11/24}+ \dots$& \checkmark&
\\
 24E & 24B &$1^{2}4^{1}6^{1}24^{2}2^{-1}3^{-2}8^{-2}12^{-1}$ &  2 & $2+2q^{1/24}+2q^{1/8}+4q^{1/6}+8q^{1/4}+12q^{7/24}+4q^{1/3}+14q^{3/8}+16q^{5/12}+4q^{11/24}+ \dots$ & \checkmark&
\\
 24F & 24C & $8^{1}24^{1}2^{-1}6^{-1}$ &  0 & $0+2q^{1/48}+2q^{1/16}+4q^{5/48}+2q^{3/16}+8q^{11/48}+4q^{13/48}+4q^{5/16}+12q^{17/48}+ \dots$&0&\checkmark
\\
 14G & 24D &$12^{1}24^{1}4^{-1}8^{-1}$ &  0 & $0+2q^{1/32}+4q^{7/96}+4q^{5/32}+4q^{19/96}+10q^{9/32}+12q^{31/96}+4q^{35/96}+16q^{13/32}+ \dots$&0&\checkmark
\\
 24H & 24E &$2^{1}6^{1}8^{1}24^{1}4^{-1}12^{-1}$ &  0 & $0+2q^{1/48}+2q^{1/16}+4q^{7/48}+6q^{3/16}+4q^{11/48}+4q^{13/48}+8q^{5/16}+12q^{17/48}+ \dots$&0&\checkmark
\\
 24I & 24F &  $2^{1}3^{1}4^{1}24^{1}1^{-1}8^{-1}$ & -1 & $1+2q^{1/24}+2q^{1/12}+2q^{1/8}+2q^{1/6}+4q^{5/24}+6q^{1/4}+8q^{7/24}+10q^{1/3}+10q^{3/8}+ \dots$ & \checkmark&
\\
 24J & 24F & $1^{1}4^{1}6^{1}24^{1}3^{-1}8^{-1}$ & 1 & $1+2q^{1/24}+2q^{1/12}+2q^{1/8}+2q^{1/6}+4q^{5/24}+6q^{1/4}+8q^{7/24}+10q^{1/3}+10q^{3/8}+ \dots$& \checkmark&
\\
 52A & 26A & $2^{1}52^{1}4^{-1}26^{-1}$ &  0 & $0+2q^{3/104}+2q^{7/104}+2q^{11/104}+2q^{15/104}+4q^{19/104}+4q^{23/104}+4q^{27/104}+8q^{31/104}+ \dots$&0&\checkmark
\\
 28B & 28A & $1^{1}4^{1}7^{1}28^{1}2^{-1}14^{-1}$ & 1 & $1+2q^{1/28}+2q^{1/14}+2q^{1/7}+4q^{5/28}+4q^{3/14}+2q^{1/4}+6q^{2/7}+10q^{9/28}+8q^{5/14}+ \dots$& \checkmark&
\\
 28C & 28A & $4^{1}28^{1}1^{-1}7^{-1}$ &  -1 & $1+2q^{1/28}+2q^{1/14}+2q^{1/7}+4q^{5/28}+4q^{3/14}+2q^{1/4}+6q^{2/7}+10q^{9/28}+8q^{5/14}+ \dots$& \checkmark&
\\
 56AB & 28B & $4^{1}56^{1}8^{-1}28^{-1}$ & 0 & $0+2q^{1/224}+2q^{17/224}+2q^{25/224}+2q^{33/224}+2q^{41/224}+4q^{7/32}+6q^{57/224}+6q^{65/224}+ \dots$&0&\checkmark
\\
 30F & 30A &$1^{1}2^{1}15^{1}30^{1}3^{-1}5^{-1}6^{-1}10^{-1}$ &  1 & $1+2q^{1/15}+4q^{1/10}+4q^{1/6}+2q^{1/5}+10q^{4/15}+8q^{3/10}+2q^{1/3}+8q^{11/30}+12q^{2/5}+ \dots$& \checkmark&
\\
 30G & 30A &$2^{2}3^{1}5^{1}30^{2}1^{-1}6^{-2}10^{-2}15^{-1}$ &  -1 & $1+2q^{1/15}+4q^{1/10}+4q^{1/6}+2q^{1/5}+10q^{4/15}+8q^{3/10}+2q^{1/3}+8q^{11/30}+12q^{2/5}+ \dots$& \checkmark&
\\
 60A & 30B & $2^{1}10^{1}12^{1}60^{1}4^{-1}6^{-1}20^{-1}30^{-1}$ & 0 & $0+2q^{1/40}+4q^{11/120}+2q^{1/8}+4q^{23/120}+6q^{9/40}+8q^{7/24}+8q^{13/40}+4q^{43/120}+ \dots$&0&\checkmark
\\
 60B & 30C & $6^{1}60^{1}12^{-1}30^{-1}$ & 0 & $0+2q^{1/360}+2q^{5/72}+2q^{49/360}+4q^{61/360}+4q^{73/360}+4q^{17/72}+4q^{97/360}+4q^{109/360}+ \dots$&0&\checkmark
\\

 30H & 30D &$1^{1}6^{1}10^{1}15^{1}3^{-1}5^{-1}$ &  1 & $1+2q^{1/30}+2q^{1/10}+2q^{2/15}+2q^{1/6}+4q^{1/5}+4q^{7/30}+4q^{4/15}+6q^{3/10}+8q^{1/3}+ \dots$& \checkmark&
\\
 30I & 30D &$2^{1}3^{1}5^{1}30^{1}1^{-1}15^{-1}$ &  -1 & $1+2q^{1/30}+2q^{1/10}+2q^{2/15}+2q^{1/6}+4q^{1/5}+4q^{7/30}+4q^{4/15}+6q^{3/10}+8q^{1/3}+ \dots$& \checkmark&
\\
 30J & 30E &$2^{1}3^{1}5^{1}30^{1}6^{-1}10^{-1}$ &  0 & $0+2q^{1/30}+2q^{1/15}+2q^{1/10}+2q^{2/15}+2q^{1/6}+2q^{1/5}+4q^{7/30}+6q^{4/15}+6q^{3/10}+ \dots$&0&\checkmark
\\
 30K & 30E & $2^{1}30^{1}3^{-1}5^{-1}$ & 0 & $0+2q^{1/30}+2q^{1/15}+2q^{1/10}+2q^{2/15}+2q^{1/6}+2q^{1/5}+4q^{7/30}+6q^{4/15}+6q^{3/10}+ \dots$&0&\checkmark
\\
 33A & 33A & $3^{1}33^{1}1^{-1}11^{-1}$ & -1 & $1+2q^{1/33}+2q^{1/11}+2q^{4/33}+4q^{2/11}+4q^{7/33}+6q^{3/11}+8q^{10/33}+2q^{1/3}+10q^{4/11}+ \dots$& \checkmark&
\\

66A &33A & $1^{1}6^{1}11^{1}66^{1}2^{-1}3^{-1}22^{-1}33^{-1}$ & 1 &$1+2q^{1/33}+2q^{1/11}+2q^{4/33}+4q^{2/11}+4q^{7/33}+6q^{3/11}+8q^{10/33}+2q^{1/3}+10q^{4/11} \dots$ & \checkmark&
\\
35A &35A &$1^{1}35^{1}5^{-1}7^{-1}$ &  1 &$1+2q^{1/35}+2q^{4/35}+2q^{1/7}+4q^{6/35}+2q^{1/5}+4q^{8/35}+2q^{9/35}+4q^{2/7}+8q^{11/35}+ \dots$ & \checkmark&
\\
70A &35A &$2^{1}5^{1}7^{1}70^{1}1^{-1}10^{-1}14^{-1}35^{-1}$ &  -1 &$1+2q^{1/35}+2q^{4/35}+2q^{1/7}+4q^{6/35}+2q^{1/5}+4q^{8/35}+2q^{9/35}+4q^{2/7}+8q^{11/35}+ \dots$& \checkmark&
\\
36A &36A &$2^{1}9^{1}36^{1}1^{-1}4^{-1}18^{-1} $&  -1 &$1+2q^{1/18}+2q^{1/12}+2q^{1/9}+2q^{1/6}+4q^{7/36}+2q^{2/9}+2q^{1/4}+4q^{5/18}+8q^{11/36}+ \dots$& \checkmark&
\\
36B &36A & $1^{1}36^{1}4^{-1}9^{-1} $& 1 &$1+2q^{1/18}+2q^{1/12}+2q^{1/9}+2q^{1/6}+4q^{7/36}+2q^{2/9}+2q^{1/4}+4q^{5/18}+8q^{11/36}+ \dots$& \checkmark&
\\
39AB &39AB &$ 1^{1}39^{1}3^{-1}13^{-1} $& 1 &$1+2q^{2/39}+2q^{1/13}+2q^{5/39}+2q^{2/13}+4q^{8/39}+4q^{3/13}+6q^{11/39}+6q^{4/13}+2q^{1/3}+ \dots$& \checkmark&
\\
78AB &39AB &$2^{1}3^{1}13^{1}78^{1}1^{-1}6^{-1}26^{-1}39^{-1} $&  -1 &$1+2q^{2/39}+2q^{1/13}+2q^{5/39}+2q^{2/13}+4q^{8/39}+4q^{3/13}+6q^{11/39}+6q^{4/13}+2q^{1/3}+ \dots$& \checkmark&
\\
40AB &40A &$2^{1}40^{1}8^{-1}10^{-1} $&  0 &$0+2q^{3/80}+2q^{1/16}+2q^{7/80}+4q^{13/80}+2q^{3/16}+4q^{19/80}+4q^{21/80}+6q^{23/80}+4q^{5/16}+ \dots$&0&\checkmark
\\
84A &42A &$4^{1}6^{1}14^{1}84^{1}2^{-1}12^{-1}28^{-1}42^{-1} $&  0 &$0+2q^{1/168}+2q^{3/56}+2q^{1/8}+2q^{25/168}+4q^{11/56}+4q^{37/168}+4q^{15/56}+6q^{7/24}+ \dots$&0&\checkmark
\\
60C &60A & $\begin{matrix} 1^{1}4^{1}6^{1}10^{1}15^{1}60^{1}2^{-1} \qquad \\ \qquad 3^{-1}5^{-1}12^{-1}20^{-1}30^{-1} \end{matrix}$ &  1 &$1+2q^{1/60}+2q^{1/10}+2q^{3/20}+2q^{1/6}+2q^{1/5}+4q^{13/60}+2q^{1/4}+4q^{4/15}+4q^{3/10}+ \dots$& \checkmark&
\\
60D &60A &$3^{1}4^{1}5^{1}60^{1}1^{-1}12^{-1}15^{-1}20^{-1}$ & -1 &$1+2q^{1/60}+2q^{1/10}+2q^{3/20}+2q^{1/6}+2q^{1/5}+4q^{13/60}+2q^{1/4}+4q^{4/15}+4q^{3/10}+ \dots$& \checkmark&
 \\
 \end{longtable}
\end{landscape}

\printbibliography

 \end{document}